\newtheorem{proposition}{Proposition}[section]
\newtheorem{theorem}[proposition]{Theorem}
\newtheorem{lemma}[proposition]{Lemma}
\newtheorem{corollary}[proposition]{Corollary}
\newtheorem{definition}[proposition]{Definition}
\newtheorem{counterexample}[proposition]{Counterexample}
\begin{document}
\title[Periodic configurations of subshifts on groups]{Periodic configurations of subshifts on groups}
\author[F. Fiorenzi]{Francesca Fiorenzi}

\address{Francesca Fiorenzi, {\rm Laboratoire de Recherche en Informatique, UMR 8623, Université Paris-Sud, 91405 Orsay, France}}

\email{fiorenzi@lri.fr}

\begin{abstract}
We study the density of periodic configurations for shift spaces
defined on (the Cayley graph of) a finitely generated group. We
prove that in the case of a full shift on a residually finite group
and in that of a group shift space on an abelian group, the periodic
configurations are dense. In the one--dimensional case we prove the
density for irreducible sofic shifts. In connection with this we
study the surjunctivity of cellular automata and local selfmappings.
Some related decision problems for shift spaces of finite type are
also investigated.

\end{abstract}

\maketitle

\section{Introduction}
In symbolic dynamics, a shift space is a set of bi--infinite words
over a finite alphabet which avoid a fixed set of forbidden factors.
It is so called because of its invariance under the shift map. A
shift is an example of a discrete--time dynamical system, i.e. a
compact metric space equipped with a continuous selfmapping that
describes one step of the evolution. In this framework, it is
interesting to study the behavior of points and sets under
iteration. In particular, some typical questions concern the density
of periodic points and the topological transitivity -- the latter
corresponding to the so--called irreducibility of the shift space.

A continuous map between two shifts that commutes with the shift map
is called a local function. A bijective local function is called a
conjugacy. An important open question in symbolic dynamics is to
decide whether two shifts are conjugate, even when they are of
finite type (i.e. described by a finite set of forbidden factors).

\medskip

In this work we consider a more general class of shift spaces, in
which instead of bi--infinite words we consider tilings of a
suitable regular graph avoiding some forbidden patterns. More
precisely, let $\Gamma$ be a finitely generated group represented by
its Cayley graph. A configuration is an element of the space
$\mathcal{A}^\Gamma$, i.e. a tiling of $\Gamma$ by means of letters
in a finite alphabet $\mathcal{A}$.
%
%The space $\mathcal{A}^\Gamma$ is naturally endowed with a metric
%and hence with an induced topology, this latter being equivalent to
%the usual product topology, where the topology in $\mathcal{A}$ is
%the discrete one. A subset $X$ of $\mathcal{A}^\Gamma$ which is
%$\Gamma$--invariant and topologically closed is called subshift,
%shift space or simply shift. This topological definition of shift is
%equivalent to the combinatorial one
%
A subset $X$ of $\mathcal{A}^\Gamma$ whose configurations avoid a
fixed set of forbidden patterns is called subshift, shift space or
simply shift. An $n$--dimensional shift is a subshift defined on the
group $\mathbb{Z}^n$. It is clear that one--dimensional shifts are
suitable subsets of bi--infinite words over a finite alphabet. Also
in this case a shift is naturally endowed with a compact metric and
shift maps in direction of each neighbor are as many continuous
selfmappings. On these shift spaces is still possible to give the
notion of local function. A cellular automaton is a local function
defined on the whole $\mathcal{A}^\Gamma$ (the full
$\mathcal{A}$--shift).

In this setting we prove that the density of the periodic
configurations is a conjugacy invariant, as is the number of
periodic configurations of a fixed period. Moreover, we show that a
group $\Gamma$ is residually finite if and only if periodic
configurations are dense in $\mathcal{A}^\Gamma$. If the alphabet
$\mathcal{A}$ is a finite group and $\Gamma$ is abelian, then
periodic configurations of a subshift which is also a subgroup of
$\mathcal{A}^{\Gamma}$ (namely a group shift) are dense.  In the
one--dimensional case, we prove the density of the periodic
configurations for an irreducible subshift of finite type of
$\mathcal{A}^\mathbb{Z}$. A sofic shift being the image under a
local map of a shift of finite type, this implies also the density
of the periodic configurations for an irreducible sofic subshift of
$\mathcal{A}^\mathbb{Z}$. We see that these results cannot be
generalized to higher dimensions.

We also investigate the surjunctivity of local selfmappings (a
selfmapping is surjunctive if it is either non--injective or
surjective). Richardson has proved in~\cite{Richardson72} that an
$n$--dimensional cellular automaton is surjunctive. In fact, the
surjunctivity problem is related to that of the density of periodic
configurations: we prove that if the periodic configurations of a
shift are dense, then its local selfmappings are surjunctive. As a
consequence we have that a cellular automaton on
$\mathcal{A}^\Gamma$ is surjunctive if $\Gamma$ is a residually
finite group.

Cellular automata have mainly been investigated in the
$n$--dimensional case. There is a deep difference between the
one--dimensional cellular automata and the higher dimensional ones.
For example, Amoroso and Patt have shown in \cite{AmorosoPatt72}
that surjectivity and injectivity of one--dimensional cellular
automata are decidable. On the other hand Kari has shown in
\cite{Kari90} and \cite{Kari94} that both the injectivity and the
surjectivity problems are undecidable for cellular automata of
higher dimension. In this work we extend the Amoroso--Patt's results
to local functions defined on shifts of finite type. Some other
well--known decision problems for $n$--dimensional shifts of finite
type are listed, proving that in the one--dimensional case they can
be solved. More generally they can be solved for the class of group
shifts using some results due to Wang \cite{Wang61} and Kitchens and
Schmidt \cite{KitchensSchmidt88}.

\medskip

The paper is organized as follows. In Sections~\ref{Cayley graphs of
finitely generated groups} and~\ref{Shift spaces and cellular
automata} the notions of shift space and local function are formally
defined, also proving that many basic results for the subshifts of
$\mathcal{A}^\mathbb{Z}$ given in the book of Lind and Marcus
\cite{LindMarcus95}, can be generalized to the subshifts of
$\mathcal{A}^\Gamma$.

In Section~\ref{One--dimensional shifts spaces}, we recall some
relevant classes of one--dimensional shifts and we give our
extension of the Amoroso--Patt's decidability results.

In Section~\ref{Density of periodic configurations}, after some
generalities about periodic configurations, we prove that their
density is a conjugacy invariant.

In Section~\ref{Residually finite groups} we recall the class of
residually finite groups and we prove that a group $\Gamma$ is
residually finite if and only if periodic configurations are dense
in $\mathcal{A}^\Gamma$.

In Section~\ref{Group shifts} we prove the density of periodic
configurations for group shifts.

In Section~\ref{Surjunctivity}, we study the surjunctivity of a
general cellular automaton on a group $\Gamma$. We also mention the
related ``Garden of Eden'' theorem which, whenever holds, guarantees
the surjunctivity of local selfmappings.

Section~\ref{The n-dimensional case} is devoted to establish for
which classes of $n$--dimensional shifts the periodic configurations
are dense. We conclude by listing some other well--known decision
problems for this class of shifts.

\section{Cayley graphs of finitely generated groups}\label{Cayley graphs of finitely generated groups}
Let $\Gamma$ be a finitely generated group for which we fix a finite
set of generators. Without loss of generality we suppose that it is
symmetric (the inverse of a generator is still a generator).

, we fix Each $\gamma \in \Gamma$ can be written as
\begin{equation}\label{decomposition}
\gamma = x_{i_1}^{\delta_1}x_{i_2}^{\delta_2} \dots
x_{i_n}^{\delta_n}
\end{equation}
where the $x_{i_j}$'s are generators and $\delta_j \in \mathbb{Z}$.
We define the \emph{length of $\gamma$} as the natural number $$\|
\gamma \| = \min \{ |\delta_1| + |\delta_2| + \dots + |\delta_n|
\mid {\rm \gamma\ is\ written\ as\ in\ (\ref{decomposition})} \}.$$
Hence $\Gamma$ is naturally endowed with a metric space structure,
with the distance given by
\begin{equation}\label{distance on Ga}
\rm{dist}(\alpha, \beta) = \| \alpha^{-1} \beta \|.
\end{equation}
For $\gamma \in \Gamma$, we denote by $D_n(\gamma)$ the disk of
radius $n$ centered at $\gamma$ and by $D_n$ the disk $D_n(1)$.
Notice that $D_1$ is the set of generators of $\Gamma$. The set
$D_n$ provides, by left translation, a \emph{neighborhood of
$\gamma$}, that is $\gamma D_n = D_n(\gamma)$. Indeed, if $\alpha
\in \gamma D_n$ then $\alpha = \gamma \beta$ with $\| \beta \| \leq
n$. Hence ${\rm dist}(\alpha, \gamma) = \| \alpha^{-1} \gamma \| =
\| \beta^{-1} \| \leq n$. Conversely, if $\alpha \in D_n(\gamma)$
then $\| \gamma^{-1} \alpha \| \leq n$, that is $\gamma^{-1} \alpha
\in D_n$). Hence $\alpha = \gamma \gamma^{-1} \alpha \in \gamma
D_n$.

\medskip

The \emph{Cayley graph} of $\Gamma$, has state set $\Gamma$ and
there is an edge from $\gamma$ to $\bar \gamma$ if there exists a
generator $x$ such that $\gamma x = \bar \gamma$. Hence the distance
defined in (\ref{distance on Ga}) coincides with the graph distance
(that is, the minimal length of a path between to states) on the
Cayley graph of $\Gamma$. Notice that, by the symmetry of the set of
generators, this graph is non--oriented. For example, we may look at
the classical cellular decomposition of Euclidean space
$\mathbb{R}^n$ as the Cayley graph of the group $\mathbb{Z}^n$ with
the presentation $\langle x_1, x_1^{-1}, \dots, x_n, x_n^{-1} \mid
x_i x_j = x_j x_i \rangle$.

\medskip

We recall that the function $g : \mathbb{N} \rightarrow \mathbb{N}$
defined by
$$g(n) = |D_n|$$
which counts the elements of the disk $D_n$, is called \emph{growth
function} of $\Gamma$. One can prove that the limit
$$\lambda = \lim_{n \rightarrow \infty} g(n)^{\frac 1 n}$$
always exists. If $\lambda > 1$ then, for all sufficiently large
$n$, we have $g(n) \geq \lambda^n$, and the group $\Gamma$ has
\emph{exponential growth}. If $\lambda = 1$, we distinguish two
cases. Either there exists a polynomial $p(n)$ such that, for all
sufficiently large $n$, we have $g(n) \leq p(n)$, in which case
$\Gamma$ has \emph{polynomial growth}. Otherwise $\Gamma$ has
\emph{intermediate growth} (i.e. $g(n)$ grows faster than any
polynomial in $n$ and slower then any exponential function $x^n$
with $x > 1$). Moreover, it is possible to prove that the type of
growth is a property of the group $\Gamma$ (i.e. it does not depend
on the choice of a set of generators). For this reason we deal with
the \emph{growth of a group}. This notion has been independently
introduced by Milnor \cite{Milnor68}, Efremovi\v c
\cite{Efremovic53} and \v Svarc \cite{Svarc55} and it is very useful
in the theory of cellular automata.

A group $\Gamma$ is \emph{amenable} if it admits a
$\Gamma$--invariant probability measure, that is a function $\mu :
2^\Gamma \longrightarrow [0 , 1]$ defined on the subsets of $\Gamma$
such that for $A, B \subseteq \Gamma$ and for every $\gamma \in
\Gamma$
\begin {itemize}
\item[--] $A \cap B = \emptyset \Rightarrow \mu(A \cup B) = \mu(A) + \mu(B)$
(\emph{finite additivity})
\item[--] $\mu(\gamma A) = \mu(A)$ (\emph{$\Gamma$--invariance})
\item[--] $\mu(\Gamma) = 1$ (\emph{normalization}).
\end{itemize}
Finite groups, abelian groups, solvable groups, subgroups of
amenable groups are all examples of amenable groups. Moreover, a
finitely generated group of non--exponential growth is amenable. The
free group $\mathbb{F}_2$ of rank 2 has exponential growth and is
non--amenable, but there exist examples of amenable groups of
exponential growth~\cite{CeccheriniGrigorchuk97}.

\section{Shift spaces and cellular automata}\label{Shift spaces and cellular automata}
Let $\mathcal{A}$ be a finite set (with at least two elements)
called \emph{alphabet}. Let $\Gamma$ a finitely generated group as
in the previous section. An element of the set $\mathcal{A}^\Gamma$
(i.e. the set of all functions $c : \Gamma \rightarrow
\mathcal{A}$), is called a \emph{configuration}. If $\Gamma =
\mathbb{Z}$, a configuration is clearly a bi--infinite word over the
alphabet $\mathcal{A}$.

For every $X \subseteq \mathcal{A}^\Gamma$ and $E \subseteq \Gamma$,
we denote by $X_E$ the restrictions to $E$ of the configurations in
$X$, that is
$$X_E = \{ c_{|E} \mid c \in X \}.$$
A \emph{pattern of X} is an element of $X_E$ where $E$ is a
non--empty finite subset of $\Gamma$. The set $E$ is called the
\emph{support} of the pattern. A \emph{block} of X is a pattern of
$X$ whose support is a disk. If $X \subseteq \mathcal{A}^\mathbb{Z}$
(i.e. $\Gamma = \mathbb{Z}$), a block of $X$ (also called a
\emph{factor}) is a finite word appearing in some bi--infinite word
of $X$. In this case, the \emph{language of X} is the set
$\mathcal{L}(X)$ of all factors of $X$.

A subset $X \subseteq \mathcal{A}^\Gamma$ is \emph{subshift} if
there exists a set of blocks $\mathcal{F} \subseteq \bigcup_{n \in
\mathbb{N}} \mathcal{A}^{D_n}$ such that $X =
\mathsf{X}_\mathcal{F}$, where
$$\mathsf{X}_\mathcal{F} = \{ c \in \mathcal{A}^\Gamma \mid {c^{\alpha}}_{|D_n}
\notin \mathcal{F}\ {\rm for\ every} \ \alpha \in \Gamma, n \in
\mathbb{N} \}.$$ In this case, $\mathcal{F}$ is a \emph{set of
forbidden blocks} of $X$. A subshift is also indifferently called
\emph{shift space} or simply \emph{shift}. An
\emph{$n$--dimensional} shift is a subshift of
$\mathcal{A}^{\mathbb{Z}^n}$.

\medskip

On the set $\mathcal{A}^\Gamma$ is defined the usual product
topology, where the topology in $\mathcal{A}$ is the discrete one.
By Tychonoff's theorem, $\mathcal{A}^\Gamma$ is also compact.  If
$c_1, c_2 \in \mathcal{A}^\Gamma$ are two configurations, we define
the distance
$${\rm dist}(c_1,c_2) = \frac 1 {n+1}$$
where $n$ is the least natural number such that $c_1 \neq c_2$ in
$D_n$. If such an $n$ does not exist, that is if $c_1 = c_2$, we set
their distance equal to zero. Notice that the topology induced by
this metric is equivalent to the product topology.

The group $\Gamma$ acts on $\mathcal{A}^\Gamma$ on the right as
follows:
$$(c^{\gamma})_{|\alpha} = c_{|\gamma\alpha}$$
for each $c \in \mathcal{A}^\Gamma$ and each $\gamma, \alpha \in
\Gamma$ (where $c_{|\alpha}$ denotes the value of $c$ at $\alpha$).

In \cite{CeccheriniFiorenziScarabotti04} we prove that the
combinatorial definition of a shift space is equivalent to a
topological one. This fact is well known in the $n$--dimensional
case.

\begin{proposition}{\rm \cite[Proposition 4.3]{{CeccheriniFiorenziScarabotti04}}}\label{shift iff forbidden}
A subset $X \subseteq \mathcal{A}^\Gamma$ is a shift if and only if
it is topologically closed and $\Gamma$--invariant (i.e. $X^{\Gamma}
= X$).
\end{proposition}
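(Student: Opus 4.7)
The plan is to handle the two implications separately, with the forward direction being a routine check and the reverse direction requiring a natural construction of a forbidden set.

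For the forward direction, assume $X = \mathsf{X}_\mathcal{F}$ for some $\mathcal{F} \subseteq \bigcup_n \mathcal{A}^{D_n}$. For $\Gamma$-invariance, I would use the identity $(c^\gamma)^\alpha = c^{\gamma\alpha}$ that follows directly from the action definition: if $c \in X$ and $\gamma \in \Gamma$, then for every $\alpha$ and $n$ we have $((c^\gamma)^\alpha)_{|D_n} = (c^{\gamma\alpha})_{|D_n} \notin \mathcal{F}$, hence $c^\gamma \in X$. For closedness, I would take a sequence $(c_k)$ in $X$ converging to $c$ in $\mathcal{A}^\Gamma$ and observe that convergence in the product topology means that for every finite $F \subseteq \Gamma$ there is $k_0$ with $c_{k|F} = c_{|F}$ for $k \geq k_0$. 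Applying this to $F = \alpha D_n$ for arbitrary $\alpha, n$ gives $(c^\alpha)_{|D_n} = (c_k^\alpha)_{|D_n} \notin \mathcal{F}$ for $k$ large enough, so $c \in \mathsf{X}_\mathcal{F} = X$.

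For the reverse direction, assume $X$ is closed and $\Gamma$-invariant, and define the candidate forbidden set
\[
\mathcal{F} = \bigcup_{n \in \mathbb{N}} \bigl(\mathcal{A}^{D_n} \setminus X_{D_n}\bigr),
\]
i.e., the blocks that do not occur as restrictions of any configuration in $X$. I would then show $X = \mathsf{X}_\mathcal{F}$ by double inclusion. The inclusion $X \subseteq \mathsf{X}_\mathcal{F}$ is where $\Gamma$-invariance enters: if $c \in X$, then $c^\alpha \in X$ for every $\alpha$, hence $(c^\alpha)_{|D_n} \in X_{D_n}$ and is not in $\mathcal{F}$. For the converse inclusion $\mathsf{X}_\mathcal{F} \subseteq X$, take $c \in \mathsf{X}_\mathcal{F}$. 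Applied at $\alpha = 1$, the definition gives $c_{|D_n} \in X_{D_n}$ for every $n$, so I can choose a configuration $c_n \in X$ agreeing with $c$ on $D_n$. Since $\Gamma = \bigcup_n D_n$, the sequence $c_n$ converges to $c$ in the product topology, and closedness of $X$ forces $c \in X$.

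The main subtlety is just bookkeeping with the right action: one must keep track of the distinction between the pattern $c_{|\alpha D_n}$ and the shifted pattern $(c^\alpha)_{|D_n}$, which amounts to choosing a consistent convention and using it uniformly in both directions. Once this is settled, the argument reduces to the two standard observations that (i) product-topology convergence is eventual agreement on finite windows, and (ii) finitely generated groups are exhausted by the nested disks $D_n$, so any configuration can be approximated arbitrarily well by configurations agreeing with it on a large disk.
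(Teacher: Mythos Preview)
Your argument is correct and is precisely the standard proof of this equivalence. Note, however, that the paper does not actually give its own proof of this proposition: it is stated with a citation to \cite{CeccheriniFiorenziScarabotti04} and no proof is reproduced here. Your write-up is exactly the argument one expects to find in that reference, so there is nothing to compare against within the present paper.
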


\noindent {\bf Remark.} A pattern with support $E$ is a function $p
: E \rightarrow \mathcal{A}$. If $\gamma \in \Gamma$, we have that
the function $\bar p : \gamma E \rightarrow \mathcal{A}$ defined as
$\bar p_{|\gamma \alpha} = p_{|\alpha}$ (for each $\alpha \in E$),
is the pattern obtained copying $p$ on the translated support
$\gamma E$. Moreover, if $X$ is a shift, we have that $\bar p \in
X_{\gamma E}$ if and only if $p \in X_E$. For this reason, in the
sequel we do not make distinction between $p$ and $\bar p$ (when the
context makes it possible). For example, a word $a_1 \dots a_n$ is
simply a finite sequence of symbols for which we do not specify (if
it is not necessary), if the support is the interval $[1,n]$ or the
interval $[2,n+1]$.

\medskip

Let $X$ be a subshift of $\mathcal{A}^\Gamma$. A function $\tau : X
\rightarrow \mathcal{A}^\Gamma$ is \emph{M--local} if there exists
$\delta : X_{D_M} \rightarrow \mathcal{A}$ such that for every $c
\in X$ and $\gamma \in \Gamma$
$$(\tau(c))_{|\gamma} = \delta({c^{\gamma}}_{|D_M}) =
\delta(c_{|\gamma\alpha_1}, c_{|\gamma\alpha_2}, \dots,
c_{|\gamma\alpha_m}),$$
where $D_M = \{ \alpha_1, \dots, \alpha_m \}$. Hence locality means
that the value of $\tau(c)$ at $\gamma$ only depends on the values
of $c$ at the elements of a fixed neighborhood of $\gamma$.

A local function defined on the whole $\mathcal{A}^\Gamma$ is called
a \emph{cellular automaton}.

\medskip

\noindent {\bf Remark.} In the definition of locality, we assume
that the alphabet of the shift $X$ is the same as the alphabet of
its image $\tau(X)$. In this assumption there is no loss of
generality because if $\tau : X \subseteq \mathcal{A}^\Gamma
\rightarrow \mathcal{B}^{\Gamma}$, one can always consider $X$ as a
shift over the alphabet $\mathcal{A} \cup \mathcal{B}$.

\medskip

The following characterization of local functions is, in the
one--dimensional case, known as the Curtis--Lyndon--Hedlund theorem.
In~\cite{CeccheriniFiorenziScarabotti04} it has been generalized as
follows to any local function.

\begin{proposition}{\rm \cite[Proposition 4.4]{{CeccheriniFiorenziScarabotti04}}}\label{local}
A function $\tau : X \rightarrow \mathcal{A}^\Gamma$ is local if and
only if it is continuous and commutes with the $\Gamma$--action
(i.e. for each $c \in X$ and each $\gamma \in \Gamma$, one has
$\tau(c^{\gamma}) = \tau(c)^{\gamma}$).
\end{proposition}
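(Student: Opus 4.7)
The plan is to prove the two directions separately, the forward one by direct computation and the backward one by a compactness argument.

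For the forward direction, suppose $\tau$ is $M$--local with local map $\delta : X_{D_M} \to \mathcal{A}$. To verify continuity, I would observe that if $c_1,c_2 \in X$ agree on $D_{n+M}$, then for every $\gamma \in D_n$ the translated disk $\gamma D_M$ sits inside $D_{n+M}$ (by the triangle inequality proved in the section on Cayley graphs), so $c_1^\gamma$ and $c_2^\gamma$ agree on $D_M$, hence $\tau(c_1)_{|\gamma} = \tau(c_2)_{|\gamma}$; thus $\tau(c_1)$ and $\tau(c_2)$ agree on $D_n$, which gives continuity in the metric defined earlier. Equivariance is formal: using $(c^\gamma)^\alpha = c^{\gamma\alpha}$, one computes $(\tau(c^\gamma))_{|\alpha} = \delta((c^\gamma)^\alpha_{|D_M}) = \delta(c^{\gamma\alpha}_{|D_M}) = \tau(c)_{|\gamma\alpha} = (\tau(c)^\gamma)_{|\alpha}$.

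For the backward direction, assume $\tau$ is continuous and commutes with the $\Gamma$--action. The key point is to extract a uniform radius $M$ from pointwise continuity; this is the step I expect to be the main obstacle, though standard once one remembers that $X$ is compact. For each $c \in X$, by continuity there exists $M_c \in \mathbb{N}$ such that any $c' \in X$ with $c'_{|D_{M_c}} = c_{|D_{M_c}}$ satisfies $\tau(c')_{|1} = \tau(c)_{|1}$ (taking $\varepsilon < 1$ in the metric forces the two images to agree at the identity). The cylinder sets $U_c = \{c' \in X \mid c'_{|D_{M_c}} = c_{|D_{M_c}}\}$ form an open cover of $X$; since $X$ is closed in the compact space $\mathcal{A}^\Gamma$, it is itself compact, so we can extract a finite subcover and let $M$ be the maximum of the corresponding $M_c$.

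With $M$ in hand I would define $\delta : X_{D_M} \to \mathcal{A}$ by $\delta(p) = \tau(c)_{|1}$ for any $c \in X$ extending the pattern $p$; well--definedness is exactly the property guaranteed by the uniform choice of $M$. Finally, using $\Gamma$--equivariance,
\[
\tau(c)_{|\gamma} = (\tau(c)^\gamma)_{|1} = \tau(c^\gamma)_{|1} = \delta\bigl((c^\gamma)_{|D_M}\bigr),
\]
so $\tau$ is $M$--local with local map $\delta$, concluding the proof. The only subtle ingredients are the compactness of $X$ (used to uniformize the continuity radius) and the identification of the distance in the Cayley graph with left translation of disks, both already established in the excerpt.
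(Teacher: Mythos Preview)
Your proof is correct and is the standard argument for the Curtis--Lyndon--Hedlund theorem in the group setting. Note, however, that the present paper does not actually supply a proof of this proposition: it merely quotes the result from \cite[Proposition~4.4]{CeccheriniFiorenziScarabotti04}, so there is no in--paper argument to compare against. Your write--up is essentially what one finds in that reference: direct verification of continuity and equivariance from the local rule in one direction, and in the other direction the extraction of a uniform memory radius via a finite subcover of cylinder sets, followed by the definition of $\delta$ and the use of equivariance to reduce to the value at the identity.

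One small remark on presentation: in the compactness step you should make explicit (as you implicitly use) that if $c$ and $c'$ agree on $D_M$ and $c$ lies in the cylinder $U_{c_i}$, then $c'$ also lies in $U_{c_i}$ because $M \geq M_{c_i}$; this is what forces $\tau(c)_{|1} = \tau(c_i)_{|1} = \tau(c')_{|1}$ and hence the well--definedness of $\delta$. Otherwise the argument is complete.
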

As a consequence of this fact, we have that \emph{the composition of
two local functions is still local}.

\begin{proposition}\label{continuous}
Let $X$ be a shift. For each $\gamma \in \Gamma$ the function $X
\rightarrow X$ that associates with each $c \in X$ its translated
configuration $c^{\gamma}$, is continuous.
\end{proposition}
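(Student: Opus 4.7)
The plan is to prove the continuity of the translation map $T_\gamma: X \to X$, $T_\gamma(c) = c^\gamma$, by a direct $\epsilon$-$\delta$ argument with the explicit metric $\mathrm{dist}(c_1,c_2) = 1/(n+1)$, where $n$ is the smallest radius at which the two configurations disagree. Note first that $T_\gamma$ does land in $X$: since $X$ is a shift, Proposition \ref{shift iff forbidden} gives $X^\Gamma = X$, so $c^\gamma \in X$ whenever $c \in X$.

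The key geometric observation is the inclusion $\gamma D_m \subseteq D_{\|\gamma\|+m}$, which is immediate from the triangle inequality in the Cayley graph: for $\beta \in D_m$ one has $\|\gamma\beta\| \leq \|\gamma\| + \|\beta\| \leq \|\gamma\| + m$. Combined with the defining identity $(c^\gamma)_{|\alpha} = c_{|\gamma\alpha}$, this shows that if $c_1, c_2 \in X$ coincide on $D_{\|\gamma\|+m}$, then $c_1^\gamma$ and $c_2^\gamma$ coincide on $D_m$.

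Given $\epsilon > 0$, choose $m \in \mathbb{N}$ with $1/(m+1) \leq \epsilon$ and set $\delta = 1/(\|\gamma\|+m+1)$. Then $\mathrm{dist}(c_1,c_2) < \delta$ forces $c_1 = c_2$ on $D_{\|\gamma\|+m}$, whence $c_1^\gamma = c_2^\gamma$ on $D_m$ and therefore $\mathrm{dist}(T_\gamma(c_1), T_\gamma(c_2)) \leq 1/(m+1) \leq \epsilon$. Since $\delta$ depends only on $\epsilon$ and on $\|\gamma\|$, this actually establishes uniform continuity of $T_\gamma$.

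A brief remark on strategy: one might hope to deduce continuity from the Curtis--Lyndon--Hedlund characterization (Proposition \ref{local}) by checking that $T_\gamma$ is local. However $T_\gamma$ commutes with the right $\Gamma$-action only when $\gamma$ is central in $\Gamma$, so in a non-abelian group that route is not available and the direct metric argument above is the appropriate proof. There is no serious obstacle in the argument; the entire proof reduces to the one-line inclusion $\gamma D_m \subseteq D_{\|\gamma\|+m}$.
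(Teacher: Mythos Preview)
Your proof is correct and follows essentially the same route as the paper: a direct $\epsilon$--$\delta$ argument using the inclusion $\gamma D_m \subseteq D_{\|\gamma\|+m}$ (the paper phrases this as ``choose $m$ with $\gamma D_n \subseteq D_m$'' without naming the explicit bound). Your added remarks---that $T_\gamma$ lands in $X$ by $\Gamma$-invariance, that the argument gives uniform continuity, and that locality via Proposition~\ref{local} fails for non-central $\gamma$---are all correct and the last one is exactly the observation the paper makes in the remark following the proposition.
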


\begin{proof}
Let $n \geq 0$ and let $m \geq 0$ such that $\gamma D_n \subseteq
D_m$. If ${\rm dist}(c , \bar c) < \frac 1 {m+1}$, then $c$ and
$\bar c$ agree on $D_m$ and therefore on $\gamma D_n$. Hence $\alpha
\in D_n \Rightarrow c_{|\gamma\alpha} = {\bar{c}}_{|\gamma\alpha}
\Rightarrow {c^\gamma}_{|\alpha} = {{{\overline
c}}^\gamma}_{|\alpha}$. That is $c^{\gamma}$ and
${\bar{c}}^{\gamma}$ agree on $D_n$ so that ${\rm dist}(c^{\gamma} ,
{\bar{c}}^{\gamma}) < \frac 1 {n+1}$.
\end{proof}

\medskip

\noindent {\bf Remark.} Notice that, in general, this function does
not commute with the $\Gamma$--action (and therefore it is not
local). Indeed, if $\Gamma$ is not abelian and $\gamma \alpha \neq
\alpha \gamma$, we may have $(c^{\gamma})^{\alpha} \neq
(c^{\alpha})^{\gamma}$.

\medskip

If $X$ is a subshift of $\mathcal{A}^\Gamma$ and $\tau : X
\rightarrow \mathcal{A}^\Gamma$ is a local function, Proposition
\ref{local} guarantees that the image $Y = \tau(X)$ is still a
subshift of $\mathcal{A}^\Gamma$. Indeed $Y$ is closed (or,
equivalently, compact) and it is also $\Gamma$--invariant. In fact
we have that $Y^{\Gamma}$ = $(\tau(X))^{\Gamma}$ =
$\tau(X^{\Gamma})$ = $\tau(X) = Y$. Moreover, if $\tau$ is injective
then $\tau : X \rightarrow Y$ is a homeomorphism and $\tau^{-1}$
commutes with the $\Gamma$--action. Indeed, if $c \in Y$ then $c =
\tau(\bar c)$ for a unique $\bar c \in X$ and we have
$$\tau^{-1}(c^{\gamma}) = \tau^{-1}(\tau(\bar c)^{\gamma}) =
\tau^{-1}(\tau({\bar c}^{\gamma})) = {\bar c}^{\gamma} =
(\tau^{-1}(c))^{\gamma}.$$ By Proposition \ref{local}, we have that
$\tau^{-1}$ is local and the well--known Richardson's theorem
\cite{Richardson72}, stating that the inverse of an invertible
$n$--dimensional cellular automaton is a cellular automaton, holds
also in this more general setting. In the one--dimensional case,
Lind and Marcus \cite[Theorem 1.5.14]{LindMarcus95} give a direct
proof of this fact. This result leads us to say that two subshifts
are \emph{conjugate} if there exists a local bijective function
between them (namely a \emph{conjugacy}). The \emph{invariants} are
the properties of a shift which are invariant under conjugacy.

\subsection{Irreducibility}
A one--dimensional shift $X \subseteq \mathcal{A}^\mathbb{Z}$ is
\emph{irreducible} if for each pair of words $u, v \in
\mathcal{L}(X)$, there exists a word $w \in \mathcal{L}(X)$ such
that the concatenated word $u w v \in \mathcal{L}(X)$. This
corresponds to the transitivity of the related discrete--time
dynamical system. The natural generalization of this property to any
subshift $X \subseteq \mathcal{A}^\Gamma$ is that for each pair of
patterns $p_1 \in X_E$ and $p_2 \in X_F$, there exists an element
$\gamma \in \Gamma$ such that $E \cap \gamma F = \emptyset$ and a
configuration $c \in X$ such that $c_{|E} = p_1$ and $c_{|\gamma F}
= p_2$. In other words, a shift is irreducible if whenever we have
two patterns appearing each one in some configuration of $X$, there
exists a configuration $c \in X$ in which these two patterns appear
simultaneously on disjoint supports. In the one--dimensional case,
these two definitions are equivalent, as proved
in~\cite[Section~2]{Fiorenzi03}.

A stronger notion is that of \emph{mixing} shift: for each pair of
patterns $p_1 \in X_E$ and $p_2 \in X_F$, there exists $M
> 0$ such that for each $\gamma \notin D_M$ there is a configuration
$c \in X$ such that $c_{|E} = p_1$ and $c_{|\gamma F} = p_2$ (notice
that if $M$ is big enough, then $E \cap \gamma F = \emptyset$). In
other words, a shift $X$ is mixing if and only if for each pair of
open sets $U, V \subseteq X$ there exists $M > 0$ such that $U \cap
V^{\gamma} \neq \emptyset$ for all $\gamma \notin D_M$. Indeed,
given a pattern $p$ with support $E$, consider the set $U = \{ c \in
X \mid c_{|E} = p \}$. If $E = \{\gamma_1, \dots, \gamma_n \}$ then
$U = \bigcap_{i = 1}^n \{ c \in X \mid c_{|\gamma_i} = p_{|\gamma_i}
\}$ is a finite intersection of open sets and hence is open.

Further forms of irreducibility has been introduced in
\cite{Fiorenzi03} and \cite{Fiorenzi04}. First, \emph{strong
irreducibility} states that if the supports of the patterns are far
enough, than it is not necessary to translate them in order to find
a configuration in which both the patterns appear. Hereafter,
\emph{semi--strong irreducibility} states that if the supports of
the patterns are far enough, than translating them ``a little'' (the
length of this difference being bounded and only depending on the
shift), we can find a common extension.

\subsection{Shifts of finite type}
A shift is \emph{of finite type} if it admits a finite set of
forbidden blocks. Hence we can decide whether or not a configuration
belongs to such a shift only checking its blocks of a fixed (and
only depending on the shift) radius.

More precisely, if $X$ is a shift of finite type, since a finite set
$\mathcal{F}$ of forbidden blocks of $X$ has a maximal support, we
can always suppose that each block of $\mathcal{F}$ has the disk
$D_M$ as support (indeed each block that contains a forbidden block
is forbidden). In this case the shift $X$ is called \emph{M--step}
and the number $M$ is called the \emph{memory of X}.

If $X$ is a one--dimensional shift, we define the memory of $X$ as
the number $M$, where $M+1$ is the maximal length of a forbidden
word. For these shifts we have the following useful property:
\begin{proposition}{\rm \cite[Theorem 2.1.8]{LindMarcus95}}\label{pasting in Z}
A shift $X \subseteq \mathcal{A}^\mathbb{Z}$ is an $M$--step shift
of finite type if and only if whenever $uv, vw \in \mathcal{L}(X)$
and $|v| \geq M$, then $uvw \in \mathcal{L}(X)$ (where $|v|$ denotes
the length of the word $v$).
\end{proposition}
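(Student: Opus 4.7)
The plan is to prove both implications separately, using a splicing construction in one direction and an induction in the other.

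For the forward direction, suppose $X$ is an $M$-step shift of finite type, so every forbidden word has length at most $M+1$. Given $uv, vw \in \mathcal{L}(X)$ with $|v| \geq M$, I pick configurations $c_1, c_2 \in X$ in which $uv$ and $vw$ respectively appear, and translate them so that the occurrences of the middle word $v$ sit at exactly the same positions in both. Define $c$ to coincide with $c_1$ on every position up through the end of the $v$-block and with $c_2$ on every position from the start of the $v$-block onward; since the two configurations agree on $v$, this is well-defined, and the word $uvw$ appears in $c$ by construction. The crucial observation is that any window in $c$ of length at most $M+1$ must lie entirely in $c_1$ or entirely in $c_2$: to straddle the seam, a window would have to cover all of $v$ plus at least one letter on each side, forcing length at least $|v|+2 \geq M+2$. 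Hence no forbidden word can appear in $c$, so $c \in X$ witnesses $uvw \in \mathcal{L}(X)$.

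For the converse, assume the pasting property. Let $\mathcal{F}$ be the finite set of words of length $M+1$ not belonging to $\mathcal{L}(X)$. The inclusion $X \subseteq \mathsf{X}_\mathcal{F}$ is immediate from the definition of $\mathcal{L}(X)$. For the reverse inclusion, take $c \in \mathsf{X}_\mathcal{F}$ and show by induction on $n$ that every factor of $c$ of length $n$ lies in $\mathcal{L}(X)$. For $n \leq M+1$, any such factor extends within $c$ to a length-$(M+1)$ factor, which avoids $\mathcal{F}$ by hypothesis and so belongs to $\mathcal{L}(X)$; the sub-factor then does as well. For $n > M+1$, write a factor as $a_1 \dots a_n$ and apply the pasting hypothesis to the decomposition $u = a_1 \dots a_{n-M-1}$, $v = a_{n-M} \dots a_{n-1}$, $w = a_n$: here $|v| = M$, the prefix $uv$ has length $n-1$ and lies in $\mathcal{L}(X)$ by induction, while $vw$ has length $M+1$ and lies in $\mathcal{L}(X)$ by the base case. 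The hypothesis then delivers $uvw = a_1 \dots a_n \in \mathcal{L}(X)$, completing the induction; hence every factor of $c$ is in $\mathcal{L}(X)$, i.e.\ $c \in X$.

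The delicate step is the splicing argument: one must translate the two witnessing configurations so that their $v$-portions align exactly, verify that the resulting hybrid is a genuine bi-infinite word, and then carefully bound the length of any window that crosses the overlap to see that the hypothesis $|v| \geq M$ precisely rules out a forbidden block straddling the seam.
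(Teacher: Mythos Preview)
Your proof is correct and is essentially the standard argument (the one in Lind and Marcus, to which the paper refers). Note, however, that the paper itself does not supply a proof of this proposition at all: it is simply quoted as \cite[Theorem 2.1.8]{LindMarcus95} and used later as a black box. So there is no ``paper's own proof'' to compare against; your write-up matches the textbook proof in both directions, including the splicing construction for the forward implication and the induction on factor length for the converse. The only implicit step worth making explicit is the final one: that a configuration all of whose factors lie in $\mathcal{L}(X)$ must itself lie in $X$ --- this follows from compactness (or equivalently from $X$ being closed), and is routine.
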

As proved in~\cite[Corollary 2.11]{Fiorenzi03}, this ``overlapping''
property holds more generally for subshifts of finite type of
$\mathcal{A}^\Gamma$.

\section{One--dimensional shifts spaces}\label{One--dimensional shifts spaces}
\subsection{Edge shifts}\label{Edge shifts} A relevant class of
one--dimensional subshifts of finite type, is that of edge shifts.
This class is strictly tied up with that of finite graphs. This
relation allows us to study the properties of an edge shift
(possible quite complex) by studying the properties of its underling
graph.

More precisely, let $\mathsf{G} = (Q, \mathcal{E})$ be a finite
directed multigraph with state set $Q$ and edge set $\mathcal{E}$.
The \emph{edge shift} $X_\mathsf{G}$ is the subshift of
$\mathcal{E}^\mathbb{Z}$ defined by
$$X_{\mathsf{G}} = \{ (e_z)_{z
\in \mathbb{Z}} \in \mathcal{E}^\mathbb{Z} \mid \mathsf{t}(e_z) =
\mathsf{i}(e_{z+1})\ {\rm for\ all\ }z \in \mathbb{Z} \},$$ where
the edge $e \in \mathcal{E}$ has initial state $\mathsf{i}(e)$ and
terminal state $\mathsf{t}(e)$.

%The structure of a finite graph (and hence of an edge shift) can be
%easily described by a matrix, the so--called \emph{adjacency matrix
%of} $\mathbf{G}$.
%
%More precisely, let $\mathbf{G}$ be a graph with vertex set
%$\mathcal{V} = \{ 1, \dots, n \}$; the \emph{adjacency matrix of}
%$\mathbf{G}$ is the matrix {\bf A} such that ${\mathbf A}_{ij}$ is
%the number of edges in $\mathbf{G}$ with initial state $i$ and
%terminal state $j$.
%
%It is easily seen that the $(i,j)$--entry of the matrix ${\mathbf
%A}^m$ (the product of {\bf A} with itself $m$ times), is the number
%of paths in $\mathbf{G}$ with length $m$ from $i$ to $j$.
%
%The edge shift $X_{\mathbf G}$ is sometimes denoted as $X_{\mathbf
%A}$, where {\bf A} is the adjacency matrix of $\mathbf{G}$. The
%fundamental role of the adjacency matrix will be clarified in the
%sequel.

Each one--dimensional shift of finite type is conjugate to an edge
shift and hence they have the same invariants. Thus, also in this
case, the properties of the shift depend on the structure of a
suitable graph. For this, it is easy to see that every edge shift is
a 1--step shift of finite type with set of forbidden blocks $\{ ef
\mid e, f \in \mathcal{E} \ {\rm and}\ \mathsf{t}(e) \neq
\mathsf{i}(f) \}$. Conversely, given a $M$--step shift of finite
type $X$, we give an effective procedure to construct a suitable
graph $\mathsf{G}$ such that $X$ is conjugate to $X_\mathsf{G}$. The
states of $\mathsf{G}$ are the words of $\mathcal{L}(X)$ of length
$M$ and there is an edge from state $a_1 \dots a_M$ to state $a_2
\dots a_{M+1}$ if the word $a_1 \dots a_M a_{M+1}$ still belongs to
$\mathcal{L}(X)$.

\begin{figure}[htbp]
\begin{center}
\begin{picture}(50,6)
\gasset{Nframe=n,Nadjust=w,Nh=6,Nmr=0} \node(1)(0,0){$a_1 \dots
a_M$} \node(2)(50,0){$a_2 \dots a_{M+1}$}
\drawedge[curvedepth=0](1,2){}
\end{picture}
\end{center}
\begin{center}
\bigskip if $a_1 \dots a_M a_{M+1} \in \mathcal{L}(X)$
\end{center}
\end{figure}

The edge shift accepted by this graph is the $(M+1)$th \emph{higher
block shift} of $X$ and is denoted by $X^{[M+1]}$. The shifts $X$
and $X^{[M+1]}$ are conjugate by the function $\tau : X^{[M+1]}
\rightarrow X$ defined by setting $\tau(c)_{|z}$ equal to the first
letter of the word $c_{|z}$, for each $c \in X^{[M+1]}$ and each $z
\in \mathbb{Z}$. This function is bijective and local. The table
below points out its behavior.

\begin{figure}[htbp]
\begin{center}
\begin{tabular}{|c|c|c|c|c|}
\hline
\dots & $a_{-1} a_0 \dots a_M$ & $a_0 a_1 \dots a_{M+1}$ & $a_1 a_2 \dots a_{M+2}$ & \dots \\
\hline
\dots & $a_{-1}$ & $a_0$ & $a_1$ & \dots \\
\hline
\end{tabular}
\end{center}
\end{figure}

Notice that in a graph $\mathsf{G}$, there can be a state from which
no edges start or at which no edges end. Such a state is called
\emph{stranded}. Clearly no bi--infinite paths in $X_{\mathsf G}$
involve a stranded state, hence the stranded states and the edges
starting or ending at them are inessential for the edge shift
$X_{\mathsf G}$. Following Lind and Marcus \cite[Definition
2.2.9]{LindMarcus95}, a graph is \emph{essential} if no state is
stranded. Removing step by step the stranded states of $\mathsf{G}$,
we get an essential graph $\bar{\mathsf G}$ that recognizes the same
edge shift. This procedure is effective, because $\mathsf{G}$ has a
finite number of states. Moreover, this ``essential form'' of
$\mathsf{G}$ is unique.

\subsection{Sofic shifts}\label{Sofic shifts}
The class of \emph{sofic} shifts has been introduced by Weiss in
\cite{Weiss73} as the smallest class of shifts containing the shifts
of finite type and closed under \emph{factorization} (i.e. the image
under a local map). Equivalently, one can see that a sofic shift is
the set of labels of bi--infinite paths in a finite automaton.

More precisely, a \emph{finite automaton} $\mathsf{A}$ is a finite
directed multigraph labeled by a finite alphabet. A subshift of
$\mathcal{A}^\mathbb{Z}$ is sofic if and only if it is the set of
the labels of all the bi--infinite paths on a finite automaton
$\mathsf{A}$ labeled by $\mathcal{A}$. In this case we say that the
shift is \emph{accepted} by $\mathsf{A}$ and it is denoted
$X_\mathsf{A}$. The automaton $\mathsf{A}$ is called
\emph{presentation} of the shift.

Obviously each edge shift is sofic, where the label of an edge is
the edge itself. Hence, by conjugation, each shift of finite type is
sofic. An accepting automaton is given below by considering the
graph $\mathsf{G}$ with labeling $\tau$ introduced in the previous
section.

\begin{figure}[htbp]
\begin{center}
\begin{picture}(50,8)
\gasset{Nframe=n,Nadjust=w,Nh=6,Nmr=0} \node(1)(-25,0){$a_1 \dots
a_M$} \node(2)(25,0){$a_2 \dots a_{M+1}$} \node(3)(75,0){$a_3 \dots
a_{M+2}$} \drawedge[curvedepth=0](1,2){$a_1$}
\drawedge[curvedepth=0](2,3){$a_2$}
\end{picture}
\end{center}
\begin{center}
\end{center}
\end{figure}

\noindent {\bf Remarks.} (1) If we deal only with essential graphs,
the language of a sofic shift is regular. (2) A sofic shift is
irreducible if and only if it has a strongly connected presentation.

\medskip

An automaton is \emph{deterministic} if for any state and any
symbol, there is at most one outgoing edge labeled by this symbol.
An irreducible sofic shift has a unique (up to isomorphisms of
automata) \emph{minimal deterministic presentation}, that is a
deterministic presentation having the fewest states among all
deterministic presentations of the shift \cite[Theorem
3.3.2]{LindMarcus95}. Lind and Marcus also proved in \cite[Lemma
3.3.10]{LindMarcus95} that the minimal deterministic presentation of
an irreducible sofic shift is strongly connected and, in
\cite[Proposition 2.2.14]{LindMarcus95}, that if $\mathsf{G}$ is a
strongly connected graph, then the edge shift $X_\mathsf{G}$ is
irreducible. As a consequence of this two facts, we have the
following corollary.

\begin{corollary}\label{irreducible sofic}
A subset $X \subseteq \mathcal{A}^\mathbb{Z}$ is an irreducible
sofic shift if and only if it is the image under a local function of
an irreducible shift of finite type.
\end{corollary}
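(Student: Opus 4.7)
\medskip

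\textbf{Proof plan.} The statement is an ``if and only if,'' so I would split it into two implications and exploit the Lind--Marcus results just cited, plus the fact that the composition of local functions is local.

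For the forward direction, suppose $X$ is an irreducible sofic shift. By \cite[Theorem 3.3.2]{LindMarcus95} it has a minimal deterministic presentation $\mathsf{A}$, and by \cite[Lemma 3.3.10]{LindMarcus95} this presentation is strongly connected. Let $\mathsf{G}$ be the underlying graph of $\mathsf{A}$ (the labels forgotten); $\mathsf{G}$ is still strongly connected, so by \cite[Proposition 2.2.14]{LindMarcus95} the edge shift $X_\mathsf{G}$ is an \emph{irreducible} shift of finite type. The labeling map $\lambda : X_\mathsf{G} \to X$ that sends a bi--infinite path in $\mathsf{G}$ to its sequence of edge labels is a $0$--local function (it reads one coordinate at a time), and by construction its image is exactly $X_{\mathsf{A}} = X$. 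This exhibits $X$ as the image of an irreducible SFT under a local map.

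For the reverse direction, suppose $Y \subseteq \mathcal{B}^\mathbb{Z}$ is an irreducible SFT and $\tau : Y \to \mathcal{A}^\mathbb{Z}$ is a local function with $\tau(Y) = X$. Then $X$ is sofic by definition (factor of an SFT), so the only thing to prove is that $X$ is irreducible. Pick $u,v \in \mathcal{L}(X)$; I must exhibit $w \in \mathcal{L}(X)$ with $u w v \in \mathcal{L}(X)$. Say $\tau$ is $M$--local. For $u$ choose a configuration $c \in X$ containing $u$, a preimage $\bar c \in Y$, and let $\bar u$ be the factor of $\bar c$ obtained by extending the support of $u$ by $M$ on each side; then $\bar u \in \mathcal{L}(Y)$ and $\tau$ of anything containing $\bar u$ recovers $u$ at the middle positions (this is where the buffer of $M$ is crucial). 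Build $\bar v$ analogously. By irreducibility of $Y$, there is $\bar w \in \mathcal{L}(Y)$ with $\bar u\, \bar w\, \bar v \in \mathcal{L}(Y)$, hence a configuration $\bar c' \in Y$ containing this word. Apply $\tau$: because of the $M$--buffer built into $\bar u$ and $\bar v$, the image $\tau(\bar c')$ contains $u$ at the center of $\bar u$ and $v$ at the center of $\bar v$, separated by some word $w \in \mathcal{A}^*$. Thus $u w v \in \mathcal{L}(X)$.

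The only subtle point, which I would present carefully, is the buffer argument in the reverse direction: the local rule only uses information in an $M$--neighborhood, so widening the pre--image patterns by $M$ on each side guarantees that the factor $\bar u$ determines $u$ regardless of the surrounding context in $\bar c'$. Everything else is a direct appeal to the Lind--Marcus results stated immediately before the corollary.
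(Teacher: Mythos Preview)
Your proof is correct and follows essentially the same route as the paper: the forward direction is identical (minimal deterministic presentation, strongly connected underlying graph, hence irreducible edge shift, labeling map as the local surjection), and for the reverse direction the paper simply asserts that ``the image under a local function of an irreducible shift is also irreducible,'' which is exactly the statement your buffer argument proves in detail. So you have supplied a careful justification of a step the paper leaves to the reader, but the overall strategy is the same.
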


\begin{proof}
Let $X$ be an irreducible sofic shift and let $\mathsf{G}$ be the
underlying graph of the minimal deterministic presentation of $X$.
Then the edge shift $X_\mathsf{G}$ is irreducible. Conversely, the
image under a local function of an irreducible shift is also
irreducible.
\end{proof}

%Notice that the strong connectedness of $\mathsf{G}$ is equivalent
%to the following property of the $n \times n$ adjacency matrix {\bf
%A} of $\mathsf{G}$: for each $i, j \in \{ 1, \dots, n \}$, there
%exists an $m \in \mathbb{N}$ such that the $(i,j)$--entry of
%${\mathbf A}^m$ is not zero.

\subsection{Decision problems}\label{Decision problems}

A natural decision problem arising in the theory of cellular
automata concerns the existence of effective procedures to establish
the surjectivity and the injectivity of the transition function.
Amoroso and Patt have shown in \cite{AmorosoPatt72} that there are
algorithms to decide surjectivity and injectivity of
one--dimensional cellular automata. On the other hand Kari has shown
in \cite{Kari90} and \cite{Kari94} that both the injectivity and the
surjectivity problems are undecidable for $n$--dimensional cellular
automata with $n > 1$.

In this section we extend the problem to local function over
subshifts of finite type of $\mathcal{A}^\mathbb{Z}$, giving in both
cases a positive answer to the existence of decision procedures.
More decision problems will be stated in Section~\ref{Further
decision problems}.

\subsubsection{A decision procedure for surjectivity}
If $X$ is a shift of finite type and $Y$ is sofic, the problem of
deciding whether or not a function $\tau : X \rightarrow Y$ given in
terms of local map is surjective, is decidable.

Let $\tau : X \rightarrow \mathcal{A}^\mathbb{Z}$ be a function
defined by a local rule $\delta$. Without loss of generality, we can
assume that $X$ has memory $2M$ and that $\tau$ is $M$--local. The
function $\tau$ can be represented in this way: consider the
presentation of the edge shift $X^{[2M+1]}$ constructed in
Section~\ref{Edge shifts}. The label of the edge between $u_1 \dots
u_M\ a\ v_1 \dots v_{M-1}$ and $u_2 \dots u_M\ a\ v_1 \dots v_{M}$
(both in $\mathcal{L}(X)$), is the letter $\delta(u_1 \dots u_M\ a\
v_1 \dots v_M)$, that is the letter to write in place of $a$ in the
image block:

\begin{figure}[htbp]
\begin{center}
\begin{picture}(80,8)
\gasset{Nframe=n,Nadjust=w,Nh=6,Nmr=0} \node(1)(0,0){$u_1 \dots u_M\
a\ v_1 \dots v_{M-1}$} \node(2)(80,0){$u_2 \dots u_M\ a\ v_1 \dots
v_M$} \drawedge[curvedepth=0](1,2){$\delta(u_1 \dots u_M\ a\ v_1
\dots v_M)$}
\end{picture}
\end{center}
\begin{center}
\end{center}
\end{figure}

\noindent In this way we get a finite automaton $\mathsf{A}$ which
is the presentation of the (sofic) shift $\tau(X)$. To see whether
or not the function $\tau$ is surjective, Lind and Marcus give in
\cite[Section 3.4]{LindMarcus95} an effective procedure to decide
whether two finite automata accept the same shifts.

\subsubsection{A decision procedure for injectivity}

If $X$ is a shift of finite type, the problem of deciding whether or
not a function $\tau : X \rightarrow \mathcal{A}^\mathbb{Z}$ given
in terms of local map is injective, is decidable.

As we have seen above, we can construct a finite automaton
$\mathsf{A}$ which is a presentation of the sofic shift $\tau(X)$.
From $\mathsf{A}$, we construct another automaton
$\mathsf{A}*\mathsf{A}$. Its states are couples $(p, q)$, where $p$
and $q$ are states of $\mathsf{A}$. There is an edge $(p, q)
\stackrel{a}{\longrightarrow} (r, s)$ labeled $a$, if in
$\mathsf{A}$ there are two edges labeled $a$ such that:
$$p \stackrel{a}{\longrightarrow} r \quad {\rm and} \quad q \stackrel{a}{\longrightarrow} s.$$
Notice that, in general, $X_\mathsf{A} = X_{\mathsf{A}*\mathsf{A}}$
and hence $\mathsf{A}*\mathsf{A}$ is another presentation of
$\tau(X)$.

A state $(p, q)$ of $\mathsf{A}*\mathsf{A}$ is \emph{diagonal} if $p
= q$. Notice that the function $\tau$ is non--injective if and only
if on the graph $\mathsf{A}$ there are two different bi--infinite
paths with the same label. This fact is equivalent to the existence
of a bi--infinite path on $\mathsf{A}*\mathsf{A}$ that involves a
non--diagonal state. Hence, starting from the graph
$\mathsf{A}*\mathsf{A}$ we construct an essential graph that accepts
the same bi--infinite paths. It suffices to check, on this latter
graph, if some non--diagonal state is left.

\section{Density of periodic configurations}\label{Density of periodic
configurations} In the $n$--dimensional case, a periodic
configuration is obtained ``repeating'' in each direction the same
finite block. Hence, translating such a configuration, we get only a
finite number of new configurations. This property leads us to
define \emph{periodic} a configuration whose $\Gamma$--orbit is
finite.

In this section we establish some generalities about periodic
configurations. We also prove that the density of periodic
configurations is an invariant of the shifts, as is the number of
the periodic configuration with a fixed period.

\begin{definition}
{\rm A configuration $c \in \mathcal{A}^\Gamma$ is
\emph{n--periodic} if its orbit $c^{\Gamma} = \{ c^{\gamma} \mid
\gamma \in \Gamma \}$ consists of $n$ elements. In this case $n$ is
the \emph{period} of $c$. A configuration is \emph{periodic} if it
is $n$--periodic for some $n \in \mathbb{N}$.}
\end{definition}

From now on, $\mathcal{Q}_n$ (resp. $\mathcal{P}_n$) denotes the set
of the periodic configurations with period (resp. dividing) $n$ and
$\mathcal{P}$ is the set $\bigcup_{n \geq 1}\mathcal{P}_n =
\bigcup_{n \geq 1}\mathcal{Q}_n$ of all periodic configurations.

\medskip

In general, a configuration $c \in \mathcal{A}^\Gamma$ is constant
on the right cosets of its own stabilizer $H_c$ (i.e. the subgroup
of all $\gamma \in \Gamma$ such that $c^\gamma = c$). Indeed, if
$\gamma \in H_c$, we have
$$c_{|\gamma \alpha} = (c^\gamma)_{|\alpha} = c_{|\alpha}.$$
Hence, if $c$ is periodic it is constant on the right cosets of a
subgroup of finite index. Now we prove that this property
characterizes periodic configurations.

\begin{proposition}\label{periodic iff constant}
A configuration $c \in \mathcal{A}^\Gamma$ belongs to
$\mathcal{P}_n$ if and only if there exists a subgroup $H \leq
\Gamma$ with finite index dividing n, such that c is constant on the
right cosets of H.
\end{proposition}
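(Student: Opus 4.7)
The plan is to use the orbit--stabilizer theorem applied to the right $\Gamma$--action on $\mathcal{A}^\Gamma$. First I would introduce the stabilizer
$$H_c = \{\gamma \in \Gamma \mid c^\gamma = c\},$$
which is a subgroup of $\Gamma$ (this uses the fact that the assignment $\gamma \mapsto c^\gamma$ is genuinely a right action, i.e.\ $(c^\gamma)^\alpha = c^{\gamma\alpha}$, which is immediate from the definition $(c^\gamma)_{|\alpha} = c_{|\gamma\alpha}$). The orbit--stabilizer theorem for right actions then gives a bijection between the orbit $c^\Gamma$ and the set of right cosets $H_c \backslash \Gamma$, so that $|c^\Gamma| = [\Gamma : H_c]$.

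For the forward implication, assume $c \in \mathcal{P}_n$, so $|c^\Gamma|$ divides $n$; equivalently $[\Gamma : H_c]$ divides $n$, so in particular $H_c$ has finite index. That $c$ is constant on the right cosets of $H_c$ has already been observed in the paragraph preceding the statement: for $\gamma \in H_c$ and any $\alpha \in \Gamma$,
$$c_{|\gamma\alpha} = (c^\gamma)_{|\alpha} = c_{|\alpha},$$
so $c$ takes the same value on all elements of $H_c\alpha$. Hence $H := H_c$ witnesses the claim.

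For the converse, suppose $H \leq \Gamma$ has finite index $k$ dividing $n$ and that $c$ is constant on the right cosets of $H$. Pick any $\gamma \in H$; then for every $\alpha \in \Gamma$ the elements $\gamma\alpha$ and $\alpha$ lie in the same right coset $H\alpha$, so $c_{|\gamma\alpha} = c_{|\alpha}$, i.e.\ $(c^\gamma)_{|\alpha} = c_{|\alpha}$. This says $c^\gamma = c$, i.e.\ $\gamma \in H_c$. Thus $H \leq H_c$, and by Lagrange $[\Gamma : H_c]$ divides $[\Gamma : H] = k$, which in turn divides $n$. Consequently $|c^\Gamma| = [\Gamma : H_c]$ divides $n$, so $c \in \mathcal{P}_n$.

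There is no real obstacle here; the only point requiring a little care is to keep the sidedness straight — the $\Gamma$--action is on the right, so the cosets that appear naturally are \emph{right} cosets $H\alpha$, and the argument would break if one tried to use left cosets instead. Everything else reduces to orbit--stabilizer and Lagrange's theorem.
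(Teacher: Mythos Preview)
Your proof is correct and follows essentially the same route as the paper's: both directions hinge on the stabilizer $H_c$, the observation that $c$ is constant on right cosets of $H_c$, and the inclusion $H \leq H_c$ in the converse together with the index divisibility $[\Gamma : H_c] \mid [\Gamma : H]$. The only cosmetic difference is that you name the orbit--stabilizer and Lagrange theorems explicitly, whereas the paper invokes them implicitly.
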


\begin{proof}
Let $c$ be $m$--periodic with $m | n$. By definition, the stabilizer
$H_c$ has finite index $m$ and, as we seen, $c$ is constant on the
right cosets of $H_c$. Conversely, if $H$ has finite index dividing
$n$ and $c$ is constant on the right cosets of $H$, we have that $H
\subseteq H_c$. Indeed, if $\gamma \in H$ and $\alpha \in \Gamma$ we
have $(c^\gamma)_{|\alpha} = c_{|\gamma \alpha} = c_{|\alpha}$ and
hence $c^\gamma = c$ so that $\gamma \in H_c$. Since $H$ is of
finite index, $H_c$ has finite index as well and the index of $H_c$
divides that of $H$ so that it divides $n$. Hence $c \in
\mathcal{P}_n$.
\end{proof}

\begin{corollary}\label{Pn finito}
The set $\mathcal{P}_n$ is finite.
\end{corollary}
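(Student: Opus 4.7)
The plan is to combine Proposition~\ref{periodic iff constant} with the classical fact that a finitely generated group has only finitely many subgroups of each given finite index.

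First, by Proposition~\ref{periodic iff constant}, every $c \in \mathcal{P}_n$ is constant on the right cosets of some subgroup $H \leq \Gamma$ whose index $m$ divides $n$. Once such an $H$ is fixed, the number of configurations constant on its right cosets equals $|\mathcal{A}|^m \leq |\mathcal{A}|^n$, which is finite. Therefore it suffices to show that, for each $m$ dividing $n$, the collection $\mathcal{H}_m$ of subgroups of $\Gamma$ of index $m$ is finite: then
\[
\mathcal{P}_n \;\subseteq\; \bigcup_{m \mid n}\;\bigcup_{H \in \mathcal{H}_m}\{\, c \in \mathcal{A}^\Gamma \mid c \text{ is constant on right cosets of } H\,\}
\]
is a finite union of finite sets.

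To see that $\mathcal{H}_m$ is finite, I would use the standard argument: each $H \in \mathcal{H}_m$ gives rise to the natural right action of $\Gamma$ on the finite set $H\backslash\Gamma$ of right cosets, i.e.\ to a homomorphism $\varphi_H : \Gamma \to S_m$ (after numbering the cosets with the coset $H$ labelled $1$), and $H$ is recovered as $\varphi_H^{-1}(\mathrm{Stab}(1))$. Thus the map $H \mapsto \varphi_H$ identifies $\mathcal{H}_m$ with a subset of $\mathrm{Hom}(\Gamma, S_m)$ (after factoring out the finite choice of labelling). Since $\Gamma$ is finitely generated, a homomorphism out of $\Gamma$ is determined by the images of the finitely many generators, so $\mathrm{Hom}(\Gamma, S_m)$ is finite, and hence so is $\mathcal{H}_m$.

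The main, and really only, obstacle is the finiteness of $\mathcal{H}_m$; everything else is a direct consequence of Proposition~\ref{periodic iff constant}. This finiteness rests essentially on the finite generation of $\Gamma$, which is a standing hypothesis throughout the paper.
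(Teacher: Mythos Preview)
Your proof is correct and follows essentially the same route as the paper's: invoke Proposition~\ref{periodic iff constant}, note that for each fixed subgroup $H$ of index $m$ there are only $|\mathcal{A}|^m$ configurations constant on its right cosets, and use that a finitely generated group has only finitely many subgroups of a given finite index. The paper simply cites this last fact from a standard reference, whereas you supply the usual proof via the coset action and $\mathrm{Hom}(\Gamma,S_m)$; otherwise the arguments coincide.
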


\begin{proof}
By Proposition \ref{periodic iff constant}, a configuration $c \in
\mathcal{A}^\Gamma$ belongs to $\mathcal{P}_n$ if and only if it is
constant on the right cosets of a subgroup $H$ with finite index
dividing $n$. Being $\Gamma$ a finitely generated group, there are
finitely many subgroups of $\Gamma$ of a fixed finite index (see,
for example, \cite{Rotman95}). Thus, these subgroups $H$ are in
finite number. For a fixed $H$ among them, there are finitely many
functions from the right cosets of $H$ to $\mathcal{A}$, that is
$|\mathcal{A}|^{[\Gamma : H]}$.
\end{proof}

For a shift space $X$, we denote by $\mathcal{P}(X)$ the set of all
periodic configuration of $X$, that is $\mathcal{P}(X) = \mathcal{P}
\cap X$. Similarly, we denote by $\mathcal{Q}_n(X)$ (resp.
$\mathcal{P}_n(X)$), the intersection $\mathcal{Q}_n \cap X$ (resp.
$\mathcal{P}_n \cap X$), and $q_n(X)$ (resp. $p_n(X)$), denotes its
cardinality.

\begin{proposition}\label{dense}
If $X \subseteq \mathcal{A}^\Gamma$ is a shift such that
$\mathcal{P}(X)$ is dense in X and $\tau : X \rightarrow
\mathcal{A}^\Gamma$ is a local function, then $\mathcal{P}(\tau(X))$
is dense in $\tau(X)$.
\end{proposition}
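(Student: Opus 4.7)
The plan is to combine the two defining properties of a local function given by Proposition~\ref{local}: continuity and commutation with the $\Gamma$-action. The first will be used to transport the density hypothesis from $X$ to $\tau(X)$; the second will be used to show that local functions send periodic configurations to periodic ones.

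Set $Y = \tau(X)$, which is itself a shift by the discussion following Proposition~\ref{local}. First I would prove the auxiliary fact that $\tau(\mathcal{P}(X)) \subseteq \mathcal{P}(Y)$. Given $c \in \mathcal{P}(X)$, the orbit $c^\Gamma$ is finite by definition, and since $\tau(c^\gamma) = \tau(c)^\gamma$ for every $\gamma \in \Gamma$, the orbit of $\tau(c)$ satisfies
\[
\tau(c)^\Gamma = \{\tau(c)^\gamma \mid \gamma \in \Gamma\} = \{\tau(c^\gamma) \mid \gamma \in \Gamma\} = \tau(c^\Gamma),
\]
which is the image of a finite set, hence finite. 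Thus $\tau(c)$ is periodic.

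Next I would establish density. Fix an arbitrary $y \in Y$ and an open neighborhood $V \subseteq Y$ of $y$; I want to produce a periodic element of $Y$ inside $V$. Pick any $c \in X$ with $\tau(c) = y$. By Proposition~\ref{local}, $\tau$ is continuous, so $\tau^{-1}(V)$ is an open subset of $X$ containing $c$, and in particular it is nonempty. Since $\mathcal{P}(X)$ is dense in $X$, there exists $c' \in \mathcal{P}(X) \cap \tau^{-1}(V)$. Then $\tau(c') \in V$ and, by the auxiliary fact above, $\tau(c') \in \mathcal{P}(Y)$. This shows that every nonempty open subset of $Y$ meets $\mathcal{P}(Y)$, i.e. $\mathcal{P}(Y)$ is dense in $Y$.

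There is no real obstacle here; the only thing to be a bit careful about is to separate the two roles of locality (continuity versus equivariance) and to notice that the density hypothesis is used only through the continuity of $\tau$, while the preservation of periodicity uses only equivariance.
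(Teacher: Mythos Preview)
Your proof is correct and follows essentially the same two-step structure as the paper: first use equivariance to show $\tau(\mathcal{P}(X)) \subseteq \mathcal{P}(Y)$, then use continuity to transport density. The only cosmetic differences are that the paper phrases periodicity via the stabilizer inclusion $H_c \subseteq H_{\tau(c)}$ rather than the orbit identity $\tau(c)^\Gamma = \tau(c^\Gamma)$, and compresses your open-set argument into the single closure chain $\overline{\mathcal{P}(Y)} \supseteq \overline{\tau(\mathcal{P}(X))} \supseteq \tau(\overline{\mathcal{P}(X)}) = \tau(X) = Y$.
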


\begin{proof}
Set $Y = \tau(X)$. First we prove that $\tau(\mathcal{P}(X))
\subseteq \mathcal{P}(Y)$. Indeed if $c \in X$ the stabilizer $H_c$
is contained in $H_{\tau(c)}$ (we have $c^\gamma = c \Rightarrow
(\tau(c))^\gamma = \tau(c^\gamma) = \tau(c)$), and if $H_c$ has
finite index, then $H_{\tau(c)}$ has finite index as well. Then
$\overline{\mathcal{P}(Y)} \supseteq \overline{\tau(\mathcal{P}(X))}
\supseteq \tau(\overline{\mathcal{P}(X)}) = \tau(X) = Y$.
\end{proof}

\begin{corollary}
The density of its periodic configurations is an invariant of the
shift.
\end{corollary}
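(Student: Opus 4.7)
The plan is to deduce the corollary immediately from Proposition~\ref{dense} by exploiting the symmetry provided by conjugacies. Recall that a conjugacy between two shifts $X$ and $Y$ is, by definition, a bijective local function $\tau : X \to Y$; as noted in the discussion following Proposition~\ref{local} (the generalization of Richardson's theorem), the inverse $\tau^{-1} : Y \to X$ is then also a local function.

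First I would unwind what it means for the property ``$\mathcal{P}(X)$ is dense in $X$'' to be a conjugacy invariant: given any conjugacy $\tau : X \to Y$, I must show that $\mathcal{P}(X)$ is dense in $X$ if and only if $\mathcal{P}(Y)$ is dense in $Y$. For the forward direction, I would apply Proposition~\ref{dense} directly to the local map $\tau$: from density of $\mathcal{P}(X)$ in $X$ I conclude density of $\mathcal{P}(\tau(X)) = \mathcal{P}(Y)$ in $\tau(X) = Y$. For the reverse direction, I would symmetrically apply Proposition~\ref{dense} to the local map $\tau^{-1} : Y \to X$, using that $\tau^{-1}(Y) = X$.

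There is no genuine obstacle here, since Proposition~\ref{dense} already does all the work; the only thing to verify is that invoking it in both directions is legitimate, which relies precisely on the fact that the inverse of a conjugacy is again local. The proof should therefore be a one- or two-line invocation of Proposition~\ref{dense} together with a reminder that $\tau^{-1}$ is local by the Richardson-type result cited earlier.
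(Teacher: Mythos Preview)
Your proposal is correct and matches the paper's approach exactly: the corollary is stated without proof, being an immediate consequence of Proposition~\ref{dense} applied to a conjugacy $\tau$ and to its inverse $\tau^{-1}$, the latter being local by the Richardson-type result discussed after Proposition~\ref{local}.
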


\begin{proposition}
Let $X \subseteq \mathcal{A}^\Gamma$, then the numbers $q_n(X)$ and
$p_n(X)$ are invariants of $X$.
\end{proposition}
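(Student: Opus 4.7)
The plan is to fix a conjugacy $\tau : X \to Y$ and show that $\tau$ restricts to bijections $\mathcal{Q}_n(X) \to \mathcal{Q}_n(Y)$ and $\mathcal{P}_n(X) \to \mathcal{P}_n(Y)$; since $\tau$ is itself a bijection, the restrictions will automatically be injective, so the real content is to show that the period is preserved in both directions.

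The central observation is that for any local map $\sigma : X' \to \mathcal{A}^\Gamma$ and any $c \in X'$, one has $H_c \subseteq H_{\sigma(c)}$, exactly as used inside the proof of Proposition~\ref{dense}: if $c^\gamma = c$, then commutation with the $\Gamma$-action gives $\sigma(c)^\gamma = \sigma(c^\gamma) = \sigma(c)$. Applying this to the conjugacy $\tau$ yields $H_c \subseteq H_{\tau(c)}$. By the discussion following Proposition~\ref{local} (the Richardson-type statement), $\tau^{-1} : Y \to X$ is also local, so applying the same observation to $\tau^{-1}$ at the configuration $\tau(c)$ gives $H_{\tau(c)} \subseteq H_{\tau^{-1}(\tau(c))} = H_c$. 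Hence $H_c = H_{\tau(c)}$ for every $c \in X$.

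Since the period of a configuration is the index $[\Gamma : H_c]$ of its stabilizer, this equality of stabilizers immediately implies that $c$ is $n$-periodic iff $\tau(c)$ is $n$-periodic, and that the period of $c$ divides $n$ iff the period of $\tau(c)$ does. Therefore $\tau$ restricts to bijections $\mathcal{Q}_n(X) \to \mathcal{Q}_n(Y)$ and $\mathcal{P}_n(X) \to \mathcal{P}_n(Y)$, and by Corollary~\ref{Pn finito} these sets are finite, so we conclude $q_n(X) = q_n(Y)$ and $p_n(X) = p_n(Y)$.

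I do not expect any real obstacle: the entire argument rests on the single symmetric use of ``local $\Rightarrow$ commutes with the action'', once for $\tau$ and once for $\tau^{-1}$. The only point that deserves care is invoking the fact that the inverse of an invertible local function is local, but this has already been recorded in the excerpt (the Richardson/Curtis--Lyndon--Hedlund style argument right after Proposition~\ref{local}), so it may be cited directly.
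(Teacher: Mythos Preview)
Your proof is correct and follows essentially the same approach as the paper: both establish $H_c = H_{\tau(c)}$ for a conjugacy $\tau$, obtaining the inclusion $H_c \subseteq H_{\tau(c)}$ exactly as in Proposition~\ref{dense}. The only minor difference is in the reverse inclusion: you invoke locality of $\tau^{-1}$ to get $H_{\tau(c)} \subseteq H_c$, whereas the paper argues more directly from injectivity of $\tau$ (if $\gamma \in H_{\tau(c)}$ then $\tau(c^\gamma) = \tau(c)^\gamma = \tau(c)$, hence $c^\gamma = c$).
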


\begin{proof}
Let $\tau : X \rightarrow Y$ be a conjugacy. We prove that $H_c =
H_{\tau(c)}$. As proved in Proposition \ref{dense}, we always have
that $H_c \subseteq H_{\tau(c)}$. Conversely, if $\gamma \in
H_{\tau(c)}$, we have $\tau(c^\gamma) = \tau(c)^\gamma = \tau(c)$.
The function $\tau$ being injective, we have $c^\gamma = c$. Thus,
$c \in \mathcal{Q}_n(X)$ if and only if $\tau(c) \in
\mathcal{Q}_n(Y)$. The same holds for configurations whose period
divides $n$.
\end{proof}

\section{Residually finite groups}\label{Residually finite groups}

A group $\Gamma$ is \emph{residually finite} if for every $\gamma
\in \Gamma \setminus \{1\}$, there exists a subgroup of finite index
$H \leq \Gamma$ such that $\gamma \notin H$. In other words, a group
if residually finite if the intersection of all its subgroup of
finite index is trivial. Examples of residually finite groups are
the groups $\mathbb{Z}^n$ and, in general, all finitely generated
abelian groups. The free group $\mathbb{F}_n$ of rank $n$ is an
example of residually finite, non--abelian group. The additive group
of rational numbers $\mathbb{Q}$ is an example of abelian,
non--finitely generated and non--residually finite group.

In this section we see that (finitely generated) residually finite
groups are precisely those groups such that for each finite set
$\mathcal{A}$, the set $\mathcal{P}$ of periodic configurations is
dense in $\mathcal{A}^\Gamma$.

\medskip

The first part of the following lemma is well known. Its extension
is due to T. Ceccherini--Silberstein and A. Mach\` \i.

\begin{lemma}\label{set}
Let $\Gamma$ be a residually finite group and let $F = \{ \gamma_1,
\dots, \gamma_n \}$ be a finite subset of $\Gamma \setminus \{1\}$.
There exists a subgroup $H \leq \Gamma$ of finite index such that $F
\subseteq \Gamma \setminus H$ and $H \gamma_i \neq H \gamma_j$ for
each $i \neq j$.
\end{lemma}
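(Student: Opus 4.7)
The plan is to reduce both conclusions to a single residual-finiteness avoidance condition, then invoke residual finiteness finitely many times and intersect.

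First, I would translate the coset condition into an avoidance condition. Observe that for any subgroup $H \le \Gamma$, one has $H\gamma_i = H\gamma_j$ if and only if $\gamma_i\gamma_j^{-1} \in H$. Hence the two conditions in the lemma together amount to requiring that $H$ avoids every element of the finite set
$$F' \;=\; F \,\cup\, \{\,\gamma_i\gamma_j^{-1} : 1 \le i,j \le n,\ i \neq j\,\}.$$
Since the $\gamma_i$ are pairwise distinct (as elements of the set $F$) and each $\gamma_i \neq 1$, every element of $F'$ lies in $\Gamma \setminus \{1\}$.

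Next, I would apply residual finiteness elementwise: for each $g \in F'$ choose a finite-index subgroup $H_g \le \Gamma$ with $g \notin H_g$. Then set
$$H \;=\; \bigcap_{g \in F'} H_g.$$
Because $F'$ is finite and the intersection of finitely many finite-index subgroups has finite index (a standard fact, bounded by the product of the indices), $H$ has finite index in $\Gamma$.

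Finally, I would verify that $H$ works: each $\gamma_i \in F \subseteq F'$ fails to lie in $H_{\gamma_i}$, hence $\gamma_i \notin H$, giving $F \subseteq \Gamma \setminus H$; and for $i \neq j$, the element $\gamma_i\gamma_j^{-1} \in F'$ fails to lie in $H_{\gamma_i\gamma_j^{-1}}$, hence not in $H$, which gives $H\gamma_i \neq H\gamma_j$. There is no real obstacle here: the only substantive idea is the reformulation of coset distinctness as an avoidance condition on the enlarged finite set $F'$, after which the argument is the standard one-element residual-finiteness trick applied finitely many times.
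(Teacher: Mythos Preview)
Your proof is correct and is essentially the same as the paper's: the paper also chooses finite-index subgroups $H_i$ avoiding each $\gamma_i$ and $H_{ij}$ avoiding each $\gamma_i\gamma_j^{-1}$, then intersects them all. Your reformulation via the enlarged set $F'$ is just a notational repackaging of the same argument, with the coset equivalence $H\gamma_i = H\gamma_j \iff \gamma_i\gamma_j^{-1} \in H$ made explicit.
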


\begin{proof}
For every $i = 1, \dots, n$ let $H_i$ be a subgroup of finite index
such that
 $\gamma_i \notin H_i$
and let $H_{ij}$ be a subgroup of finite index such that
$\gamma_i\gamma_j^{-1} \notin H_{ij}$ (where $i \neq j$). The
intersection $H$ of all these subgroups has finite index as well.
Moreover $\gamma_i \notin H$ (for each $i$) and
$\gamma_i\gamma_j^{-1} \notin H$ ($i \neq j$).
\end{proof}

\medskip

\noindent {\bf Remark.} In particular, the set $F$ in previous lemma
can be extended to a set of right coset representatives of the
subgroup $H$.

\begin{theorem}\label{residually only if dense}
Let $\Gamma$ be a finitely generated group and $\mathcal{A}$ a
finite alphabet. If $\Gamma$ is residually finite, then the set
$\mathcal{P}$ of periodic configurations is dense in
$\mathcal{A}^\Gamma$.
\end{theorem}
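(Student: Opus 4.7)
The plan is to show density directly: given any $c \in \mathcal{A}^\Gamma$ and any $n \geq 0$, I will construct a periodic configuration $\bar c$ that agrees with $c$ on the disk $D_n$, which gives $\mathrm{dist}(c, \bar c) \leq 1/(n+1)$. Since the cylinders obtained by fixing a configuration on some $D_n$ form a neighborhood basis of the product topology on $\mathcal{A}^\Gamma$, this is enough.

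The construction relies on Lemma \ref{set} and its accompanying remark. Apply the lemma with $F = D_n \setminus \{1\}$, which is a finite subset of $\Gamma \setminus \{1\}$. This produces a finite-index subgroup $H \leq \Gamma$ such that the elements of $D_n$ lie in pairwise distinct right cosets of $H$ (the identity in $H$ itself, and each $\gamma_i \in D_n \setminus \{1\}$ in a coset $H\gamma_i \neq H$). By the remark, we can extend $D_n$ to a full set $T \subseteq \Gamma$ of right-coset representatives of $H$ containing $D_n$.

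Now I define $\bar c$ to be constant on each right coset of $H$: fix any $a_0 \in \mathcal{A}$ and set
\[
\bar c_{|ht} =
\begin{cases}
c_{|t} & \text{if } h \in H \text{ and } t \in T \cap D_n, \\
a_0 & \text{if } h \in H \text{ and } t \in T \setminus D_n.
\end{cases}
\]
This is well defined because every $\gamma \in \Gamma$ has a unique decomposition $\gamma = h t$ with $h \in H$, $t \in T$. By construction $\bar c$ is constant on the right cosets $Ht$, so Proposition \ref{periodic iff constant} (with $n = [\Gamma : H]$, which is finite) guarantees that $\bar c \in \mathcal{P}$. For $\gamma \in D_n$ we have $\gamma \in T$, so $\bar c_{|\gamma} = c_{|\gamma}$, i.e.\ $\bar c$ and $c$ agree on $D_n$.

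The only delicate point is lining up the directions of the cosets with the $\Gamma$-action convention used in the paper: since $(c^\gamma)_{|\alpha} = c_{|\gamma\alpha}$, the stabilizer $H_c$ acts on $\Gamma$ by left multiplication, so ``constant on orbits'' means ``constant on right cosets $H\alpha$'', which is exactly what the construction above achieves. Beyond this bookkeeping there is no real obstacle; Lemma \ref{set} does the essential work of ensuring that the finitely many points of $D_n$ can be independently prescribed modulo a single finite-index subgroup.
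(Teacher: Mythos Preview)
Your proof is correct and follows essentially the same route as the paper's: apply Lemma~\ref{set} with $F = D_n \setminus \{1\}$, extend $D_n$ to a full set of right-coset representatives of the resulting finite-index subgroup $H$, and define a configuration constant on right cosets that agrees with $c$ on $D_n$. The only cosmetic difference is that the paper sets $(c_n)_{|hd} = c_{|d}$ for \emph{every} representative $d$, whereas you use a default value $a_0$ on representatives outside $D_n$; this is immaterial since only agreement on $D_n$ is needed.
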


\begin{proof}
Suppose that $\Gamma$ is residually finite. We have to prove that
$\mathcal{A}^\Gamma = \overline{\mathcal{P}}$. Fix $c \in
\mathcal{A}^\Gamma$ and let $H$ be the subgroup of finite index
whose existence is guaranteed by Lemma~\ref{set} with $F = D_n
\backslash \{ 1 \}$, and let $D$ be a set of right coset
representatives of $H$ containing $D_n$. If $\gamma \in \Gamma$ and
$\gamma = h d$ with $h \in H$ and $d \in D$, define a configuration
$c_n$ such that $(c_n)_{|\gamma} = c_{|d}$. This configuration being
constant on the right cosets of $H$, it is periodic. Moreover $c$
and $c_n$ agree on $D_n$ and hence ${\rm dist}(c,c_n) < \frac 1
{n+1}$. Then the sequence of periodic configurations $(c_n)_n$
converges to $c$.\end{proof}

The same result is also given by Yukita \cite{Yukita99}. The
converse of this theorem also holds.

\begin{theorem}\label{residually iff dense}
Let $\Gamma$ be a finitely generated group and $\mathcal{A}$ a
finite alphabet. Then $\Gamma$ is residually finite if and only if
the set $\mathcal{P}$ of periodic configurations is dense in
$\mathcal{A}^\Gamma$.
\end{theorem}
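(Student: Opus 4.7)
The forward direction is Theorem~\ref{residually only if dense}, so the plan is to prove the converse: assuming $\mathcal{P}$ is dense in $\mathcal{A}^\Gamma$, I want to show that for every $\gamma \in \Gamma \setminus \{1\}$ there is a finite index subgroup $H \leq \Gamma$ with $\gamma \notin H$. The idea is to exhibit such an $H$ as the stabilizer $H_{c'}$ of a suitably chosen periodic configuration $c'$, using that stabilizers of periodic configurations have finite index (Proposition~\ref{periodic iff constant}).

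Concretely, fix $\gamma \in \Gamma \setminus \{1\}$ and pick two distinct letters $a, b \in \mathcal{A}$. Define the configuration $c \in \mathcal{A}^\Gamma$ by $c_{|1} = a$, $c_{|\gamma} = b$, and the remaining values arbitrarily. Choose $n$ large enough that $\{1, \gamma\} \subseteq D_n$. Since $\mathcal{P}$ is dense in $\mathcal{A}^\Gamma$, there exists a periodic configuration $c' \in \mathcal{P}$ with ${\rm dist}(c, c') < \frac{1}{n+1}$, so $c'$ agrees with $c$ on $D_n$ and in particular $c'_{|1} = a \neq b = c'_{|\gamma}$.

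Now observe that $(c')^{\gamma}{}_{|1} = c'_{|\gamma \cdot 1} = c'_{|\gamma} = b$, whereas $c'_{|1} = a$, so $(c')^{\gamma} \neq c'$ and therefore $\gamma \notin H_{c'}$. On the other hand, because $c'$ is periodic, its $\Gamma$-orbit is finite, so $H_{c'}$ is a subgroup of $\Gamma$ of finite index (this is exactly the content of Proposition~\ref{periodic iff constant}, where the stabilizer index equals the cardinality of the orbit). Thus $H := H_{c'}$ is the required finite index subgroup avoiding $\gamma$, and $\Gamma$ is residually finite.

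There is no real obstacle here: the only thing to double-check is that the hypothesis $|\mathcal{A}| \geq 2$ (standing throughout the paper) makes the two distinct letters $a, b$ available, and that density is applied in the correct metric to force coincidence on the finite set $\{1, \gamma\}$. Combined with the previous theorem, this yields the biconditional characterization stated in Theorem~\ref{residually iff dense}.
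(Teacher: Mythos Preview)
Your argument is correct and is essentially the same as the paper's: both hinge on the observation that a periodic configuration $c'$ with $c'_{|1}\neq c'_{|\gamma}$ yields a finite-index subgroup $H_{c'}$ not containing $\gamma$. The only cosmetic difference is that the paper phrases the converse as a contrapositive (if $\gamma$ lies in every finite-index subgroup then every periodic $c$ satisfies $c_{|1}=c_{|\gamma}$, so a configuration with distinct values at $1$ and $\gamma$ cannot be approximated), whereas you argue directly.
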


\begin{proof}
If $\Gamma$ is not residually finite, let $\gamma \neq 1$ be an
element belonging to all the subgroups of $\Gamma$ of finite index.
In particular $\gamma \in \bigcap_{c \in \mathcal{P}}H_c$ so that,
for each $c \in \mathcal{P}$, we have $c^{\gamma} = c$ and hence
$c_{|\gamma} = c_{|1}$. Let $\bar c \in \mathcal{A}^\Gamma$ such
that $\bar c_{|\gamma} \neq \bar c_{|1}$, then for each $n$ such
that $\gamma \in D_n$ and each $c \in \mathcal{P}$ we have $\bar
c_{|D_n} \neq c_{|D_n}$. Hence ${\rm dist}(\bar c , c) \geq \frac 1
{n+1}$ and $\bar c \notin {\overline{\mathcal{P}}}$.
\end{proof}

\section{Group shifts}\label{Group shifts}
If the alphabet $\mathcal{A}$ is a finite group, the full shift
$\mathcal{A}^{\Gamma}$ is also a group with product defined as in
the direct product of infinitely  many copies of $\mathcal{A}$.
Endowed with this operation the space $\mathcal{A}^{\Gamma}$ is a
compact metric topological group. A subshift $X \subseteq
\mathcal{A}^{\Gamma}$ which is also a subgroup is called \emph{group
shift}.

In this section we prove (as a consequence of a more general theorem
in \cite{KitchensSchmidt89}), that for this class of shifts the
periodic configurations are dense. Moreover, as we will see in
Section~\ref{Further decision problems}, some well--known decision
problems can be solved for the class of group shifts.

Clearly a group shift is also a compact (metric) group. Hence it is
an example of dynamical system $(X, \Gamma)$, where $X$ is a compact
group and $\Gamma$ is a subgroup of the group Aut($X$) of the
automorphisms of $X$ which are also continuous. Indeed the action of
$\Gamma$ defines a subgroup of Aut($X$): for a fixed $\gamma \in
\Gamma$, the bijective function $c \mapsto c^{\gamma}$ from $X$ to
$X$ is obviously a group homomorphism and, as proved in
Lemma~\ref{continuous}, it is also continuous.

If $(X, \Gamma)$ is such a dynamical system, the group $\Gamma$
\emph{acts expansively} on $X$ if there exists a neighborhood $U$ of
the identity $1$ in $X$ such that $\bigcap_{\gamma \in \Gamma}
\gamma(U) = \{ 1 \}$. The set of \emph{$\Gamma$--periodic} points is
the set of points $x \in X$ such that $\{ \gamma(x) \mid \gamma \in
\Gamma \}$ is finite. Clearly it coincides with the set
$\mathcal{P}(X)$ if $X$ is a group shift.

If $X$ is metrizable and $\Gamma$ is an infinite and finitely
generated abelian group, Kitchens and Schmidt prove in \cite[Theorem
3.2]{KitchensSchmidt89} that if $\Gamma$ acts expansively on $X$
then $(X,\Gamma)$ satisfies the descending chain condition (i.e.
each nested decreasing sequence of closed $\Gamma$--invariant
subgroups is finite), if and only if $(X,\Gamma)$ is conjugate to a
dynamical system $(Y,\Gamma)$, where $Y$ is a group subshift of
$\mathcal{A}^{\Gamma}$ and $\mathcal{A}$ is a compact Lie group.
Notice that, in this context, a conjugation is a continuous groups
isomorphism that commutes with the $\Gamma$--action.

A consequence of this fact is the following theorem.

\begin{theorem}{\rm \cite[Corollary 7.4]{KitchensSchmidt89}}
Let X be a compact group and $\Gamma \leq {\rm Aut}(X)$ a finitely
generated, abelian group. If $\Gamma$ acts expansively on X then the
set of $\Gamma$--periodic points is dense in X.
\end{theorem}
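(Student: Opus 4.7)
The plan is to derive this corollary from the structure theorem of Kitchens and Schmidt quoted immediately above, and then to reduce the density question to one about group shifts, where the residual finiteness of finitely generated abelian groups can be brought to bear. The first step is to put $(X,\Gamma)$ into a combinatorial form: if we can verify that $(X,\Gamma)$ satisfies the descending chain condition on closed $\Gamma$-invariant subgroups, then the preceding theorem conjugates it (via a continuous group isomorphism commuting with the action) to some $(Y,\Gamma)$ with $Y$ a group subshift of $\mathcal{A}^{\Gamma}$ and $\mathcal{A}$ a compact Lie group. Because any such conjugation preserves stabilizers (the argument is the same as in Proposition~\ref{dense}, but with the inverse also being a continuous homomorphism commuting with $\Gamma$), the sets of $\Gamma$-periodic points correspond bijectively and densities are equivalent. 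To verify the descending chain condition, I would argue that expansiveness and the compactness of $X$ force any infinite properly descending chain of closed $\Gamma$-invariant subgroups to have intersection contained in $\bigcap_{\gamma \in \Gamma}\gamma(U) = \{1\}$, and then use the standing Lie-theoretic finiteness to show the chain must in fact terminate.

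Once in the group-shift model, fix $c \in Y$ and an approximation radius $n$. Since $\Gamma$ is finitely generated abelian, it is residually finite, so Lemma~\ref{set} furnishes a finite-index subgroup $H \leq \Gamma$ whose system of coset representatives contains $D_n$. The set $Y^H = \{y \in Y : y^h = y \text{ for every } h \in H\}$ is a closed subgroup of the compact group $Y$, consists of $\Gamma$-periodic configurations by Proposition~\ref{periodic iff constant}, and is finite by Corollary~\ref{Pn finito}. The aim is now to exhibit an element of $Y^H$ that agrees with $c$ on $D_n$.

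The hard step, and the main obstacle, is precisely this last one: the naive construction used in the proof of Theorem~\ref{residually only if dense}, which copies $c\rvert_{D_n}$ periodically along the right cosets of $H$, produces a periodic element of $\mathcal{A}^\Gamma$ but not necessarily of $Y$. To repair this, I would use the group structure of $Y$ together with the abelian structure of $\Gamma$. The restriction homomorphism $Y \to Y_{D}$ (with $D$ a transversal containing $D_n$) and the subgroup chain $\{Y^{H'}\}$ indexed by finite-index subgroups $H' \leq H$ ordered by inclusion give a nested family of closed subgroups whose union of images approximates $Y$ arbitrarily closely, because the closed subgroup $\overline{\bigcup_{H'} Y^{H'}}$ is $\Gamma$-invariant and — thanks to expansiveness, which rules out any strictly proper $\Gamma$-invariant closed subgroup containing all finite orbits — must equal $Y$. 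This last assertion is the genuine content of Kitchens--Schmidt's Corollary 7.4 and is where the Lie group hypothesis on $\mathcal{A}$ really enters, through either Pontryagin duality in the abelian case or a fixed-point/averaging argument in the general compact Lie case.
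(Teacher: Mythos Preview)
The paper does not prove this theorem at all: it is quoted verbatim as \cite[Corollary~7.4]{KitchensSchmidt89} and used as a black box to derive Corollary~\ref{dense in group shift}. So there is no ``paper's own proof'' to compare against, and your proposal is an attempt to reconstruct a result the author deliberately imported.

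As a reconstruction, your sketch has two genuine gaps. First, the argument for the descending chain condition does not work: showing that a strictly descending chain of closed $\Gamma$-invariant subgroups has trivial intersection (even if that followed from expansiveness, which it does not in the way you describe) says nothing about the chain terminating. Infinite strictly descending chains with trivial intersection exist in many compact groups. In fact, as the paper states the Kitchens--Schmidt structure theorem, expansiveness gives only the equivalence ``DCC $\Leftrightarrow$ conjugate to a group subshift,'' so DCC is an additional hypothesis, not a consequence you can read off.

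Second, and more seriously, your final paragraph is circular. After reducing to a group subshift $Y \leq \mathcal{A}^\Gamma$, you correctly note that the naive periodic-copy construction of Theorem~\ref{residually only if dense} need not land in $Y$, and you then assert that $\overline{\bigcup_{H'} Y^{H'}} = Y$ by appealing to ``the genuine content of Kitchens--Schmidt's Corollary~7.4.'' But that corollary is exactly the statement you are trying to prove. The substantive input---Pontryagin duality and the Noetherian property of the dual module over the Laurent ring $\mathbb{Z}[\Gamma]$ in the abelian case, or the corresponding Lie-theoretic argument---is precisely what is missing, and you have identified but not supplied it.
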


Hence we can prove the following result for group shifts.

\begin{corollary}\label{dense in group shift}
Let $\mathcal{A}$ be a finite group and let $\Gamma$ be a finitely
generated, abelian group. If $X \leq \mathcal{A}^\Gamma$ is a group
shift, then the set $\mathcal{P}(X)$ of periodic configurations of X
is dense in X.
\end{corollary}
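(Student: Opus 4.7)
The plan is to apply the Kitchens--Schmidt theorem stated immediately before the corollary. That theorem needs three ingredients: $X$ must be a compact group, $\Gamma$ must sit inside $\mathrm{Aut}(X)$ as a finitely generated abelian subgroup, and the $\Gamma$-action must be expansive. The first two are already assembled in the preceding discussion. Indeed, $X$ is topologically closed by Proposition~\ref{shift iff forbidden} and is a subgroup of the compact topological group $\mathcal{A}^{\Gamma}$, so it is itself a compact topological group. For each $\gamma \in \Gamma$, the translation $c \mapsto c^{\gamma}$ is a continuous group automorphism of $X$ (continuity from Proposition~\ref{continuous}; it maps $X$ into $X$ by $\Gamma$-invariance of the shift; the coordinatewise product makes it a group homomorphism; and abelianness of $\Gamma$ turns $\gamma \mapsto (c \mapsto c^{\gamma})$ into a genuine group homomorphism into $\mathrm{Aut}(X)$).

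The only substantive verification is expansiveness. Let $1_X \in X$ denote the constant configuration equal to the identity $1_{\mathcal{A}}$, and set
$$U = \{c \in X \mid c_{|1} = 1_{\mathcal{A}}\}.$$
Since $\mathcal{A}$ is finite (hence discrete), $U$ is an open neighborhood of $1_X$. For any $\gamma \in \Gamma$, using abelianness one has $(c^{\gamma})_{|\gamma^{-1}} = c_{|\gamma \gamma^{-1}} = c_{|1}$, so
$$\gamma(U) = \{d \in X \mid d_{|\gamma^{-1}} = 1_{\mathcal{A}}\}.$$
Intersecting over all $\gamma \in \Gamma$ forces every coordinate of a common element to equal $1_{\mathcal{A}}$, so
$$\bigcap_{\gamma \in \Gamma} \gamma(U) = \{1_X\},$$
which is exactly expansiveness.

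Applying the Kitchens--Schmidt theorem, the $\Gamma$-periodic points of $X$ are dense in $X$; but these are precisely the configurations with finite $\Gamma$-orbit, i.e., the elements of $\mathcal{P}(X)$, as already noted in the section. The main obstacle, to the extent there is one, is the expansiveness check, and that reduces to the simple observation that the finiteness of $\mathcal{A}$ provides arbitrarily small clopen cylinder neighborhoods of the identity configuration.
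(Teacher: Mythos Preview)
Your proof is correct and follows essentially the same route as the paper: both verify expansiveness via the cylinder neighborhood $U = \{c \in X \mid c_{|1} = 1_{\mathcal{A}}\}$ and then invoke the Kitchens--Schmidt theorem. (A tiny remark: the identity $(c^{\gamma})_{|\gamma^{-1}} = c_{|1}$ follows directly from the definition of the action and does not require abelianness; abelianness is only used, as you noted earlier, to make $\gamma \mapsto (c \mapsto c^{\gamma})$ a homomorphism into $\mathrm{Aut}(X)$.)
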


\begin{proof}
We prove that the group $\Gamma$ acts expansively on $X$. Indeed the
identity in $X$ is the configuration $c$ assuming the constant value
$1_\mathcal{A}$, where $1_\mathcal{A}$ is the identity of
$\mathcal{A}$. Consider the neighborhood $U$ of $c$ consisting of
all those configurations of $X$ assuming the value $1_\mathcal{A}$
at $1_{\Gamma}$. Obviously $\bigcap_{\gamma \in \Gamma} \{
c^{\gamma} \mid c \in U \}
 = \bigcap_{\gamma \in \Gamma} \{ c \in X \mid c_{|\gamma} = 1_\mathcal{A} \}
 = \{ c \}$. \end{proof}

\medskip

In \cite{KitchensSchmidt89} is also proved that if $X$ is a group
shift, then $X$ is of finite type. Indeed the following theorem is
proved.

\begin{theorem}{\rm \cite[Corollary 3.9]{KitchensSchmidt89}}
Let $\mathcal{A}$ be a compact Lie group. If $X \leq
\mathcal{A}^\Gamma$ is a closed $\Gamma$--invariant subgroup there
exists a finite set $D \subseteq \Gamma$ such that
$$X = \{ c \in \mathcal{A}^{\Gamma} \mid {c^{\gamma}}_{|D} \in H\ {\rm for\ every\ } \gamma \in \Gamma \},$$
where $H$ is a closed subgroup of $\mathcal{A}^D$.
\end{theorem}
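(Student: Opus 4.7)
The plan is to construct the finite set $D$ as a disk $D_{n_0}$ for a sufficiently large $n_0$, and to take $H$ as the projection of $X$ on $D_{n_0}$. So for each $n \geq 1$ let $H_n := \pi_{D_n}(X)$, where $\pi_{D_n} : \mathcal{A}^\Gamma \to \mathcal{A}^{D_n}$ is the restriction. Since $X$ is a closed subgroup and $\pi_{D_n}$ is a continuous homomorphism, $H_n$ is a closed subgroup of the compact Lie group $\mathcal{A}^{D_n}$. Define
\[
Y_n := \{ c \in \mathcal{A}^\Gamma \mid (c^{\gamma})_{|D_n} \in H_n \text{ for every } \gamma \in \Gamma \}.
\]
Each $Y_n$ is a shift of finite type, it is a closed $\Gamma$-invariant subgroup of $\mathcal{A}^\Gamma$, and clearly $X \subseteq Y_n$. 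Also $Y_{n+1} \subseteq Y_n$, because $H_{n+1}$ restricts to $H_n$ and $D_n \subseteq D_{n+1}$.

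The next step is to show that $\bigcap_n Y_n = X$. Given $c \in \bigcap_n Y_n$, for each $n$ there exists $c_n \in X$ with ${c_n}_{|D_n} = c_{|D_n}$. By compactness of $\mathcal{A}^\Gamma$, a subsequence $(c_{n_k})$ converges to some $c^\star$, and since $X$ is closed we have $c^\star \in X$. But $c^\star$ agrees with $c$ on every $D_n$, hence $c^\star = c$, so $c \in X$. This, together with $X \subseteq Y_n$, gives the desired equality.

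The conclusion would then follow from showing that the descending chain $Y_1 \supseteq Y_2 \supseteq \cdots$ stabilizes: once $Y_{n_0} = Y_{n_0+1} = \cdots$, the intersection equals $Y_{n_0}$ and therefore $X = Y_{n_0}$, proving the theorem with $D := D_{n_0}$ and $H := H_{n_0}$. This stabilization is the main obstacle and is exactly where the Lie group hypothesis enters. Compact Lie groups satisfy the descending chain condition on closed subgroups (dimensions of closed Lie subgroups can only drop finitely often, after which identity components coincide, and the remaining finite quotients give only finitely many further steps). A strict descent $Y_n \supsetneq Y_{n+1}$ can be detected on a finite projection: concretely, $\pi_{D_m}(Y_n) \supsetneq \pi_{D_m}(Y_{n+1})$ for some $m \geq n+1$, and both are closed subgroups of the compact Lie group $\mathcal{A}^{D_m}$.

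The hard part is to arrange matters so that infinitely many strict descents of the $Y_n$ would force an infinite strict descent inside a \emph{single} compact Lie group $\mathcal{A}^{D_m}$, contradicting the DCC there. One natural route is to argue by contradiction: if $Y_n \supsetneq Y_{n+1}$ for infinitely many $n$, a diagonal/compactness argument using the bounded dimension of the $H_n$ (viewed compatibly inside the inverse system $\mathcal{A}^{D_1} \leftarrow \mathcal{A}^{D_2} \leftarrow \cdots$) must produce a stable $m$ at which the sequence $(\pi_{D_m}(Y_n))_n$ itself descends strictly infinitely often, yielding the contradiction. This is the delicate step that genuinely uses the no-small-subgroups property of $\mathcal{A}$, and where the full strength of the Kitchens--Schmidt machinery is needed.
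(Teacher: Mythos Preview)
The paper does not prove this theorem at all: it is quoted verbatim as \cite[Corollary 3.9]{KitchensSchmidt89} and used as a black box. So there is no ``paper's own proof'' to compare against.

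That said, your outline is the right one and matches the Kitchens--Schmidt strategy: set $H_n=\pi_{D_n}(X)$, let $Y_n$ be the finite-type group shift it determines, check $X\subseteq Y_{n+1}\subseteq Y_n$ and $\bigcap_n Y_n=X$, and then argue that the chain $(Y_n)_n$ stabilises. Two remarks. First, a small technical point: $H_n$ is closed not merely because $\pi_{D_n}$ is a continuous homomorphism, but because $X$ is compact, so its continuous image is compact and hence closed. Second, and more importantly, the stabilisation step is not actually proved in your proposal. You correctly identify that a strict drop $Y_n\supsetneq Y_{n+1}$ is witnessed on some finite window $D_m$, but the window $m$ depends on $n$; the ``diagonal/compactness argument'' you invoke to force infinitely many drops into a single $\mathcal{A}^{D_m}$ is only named, not carried out, and in fact does not go through in any straightforward way.

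The honest completion is the one the paper itself uses implicitly: the descending chain condition for closed $\Gamma$-invariant subgroups of $\mathcal{A}^\Gamma$ is precisely the content of \cite[Theorem 3.2]{KitchensSchmidt89}, quoted in the paragraph just above the statement you are proving (expansivity of the $\Gamma$-action on $\mathcal{A}^\Gamma$ being immediate). Granting that DCC, your chain $(Y_n)_n$ stabilises at some $n_0$, and your argument then gives $X=Y_{n_0}$ with $D=D_{n_0}$ and $H=H_{n_0}$. In short: your reduction is correct, but the core difficulty---the DCC---is exactly what Kitchens and Schmidt prove, and you should cite it rather than attempt to rederive it by an unspecified compactness trick.
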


Hence if $\mathcal{A}$ is finite and $X$ is a group shift, the set
$\mathcal{A}^D \setminus H$ is finite and is a set of forbidden
blocks for $X$. Although this fact, $X$ is not necessarily
irreducible. For example, consider the group shift $\{ \bar 0 , \bar
1 , \overline{01} , \overline{10} \}$ in $(\mathbb{Z} / 2
\mathbb{Z})^{\mathbb{Z}}$ (where, for each finite word $w$, we
denote by $\bar w$ the bi--infinite word $\dots w w w \dots$).

\medskip

Notice that an abelian, finitely generated group $\Gamma$ is also
residually finite. We have another proof of this fact by fixing a
finite group $\mathcal{A}$ and applying Corollary~\ref{dense in
group shift} to the full group shift $\mathcal{A}^\Gamma$. By
Theorem~\ref{residually iff dense} we have that $\Gamma$ is
residually finite.

\section{Surjunctivity}\label{Surjunctivity}
A selfmapping $\tau : X \rightarrow X$ on a set $X$ is
\emph{surjunctive} if it is either non--injective or surjective. In
other words a function is surjunctive if it is not a strict
embedding and hence the implication injective $\Rightarrow$
surjective holds. This notion is due to
Gottschalk~\cite{Gottschalk73}.

The simplest example is that of a \emph{finite} set $X$ and a
selfmapping $\tau : X \rightarrow X$. Clearly each function of this
kind is surjunctive. Another example is given by an endomorphism of
a finite--dimensional vector space and by a regular selfmapping of a
complex algebraic variety (see \cite{Ax68}). Many others examples of
surjunctive functions are given by Gromov in~\cite{Gromov99}.
Moreover, Richardson proves in \cite{Richardson72} that each
$n$--dimensional cellular automaton is surjunctive.

In this section, we consider the surjunctivity of a general cellular
automaton over a group $\Gamma$ and that of a local function on a
subshift. In fact, we prove that if the periodic configurations of a
subshift $X \subseteq \mathcal{A}^\Gamma$ are dense, then a local
function on $X$ is surjunctive.

The following is a sufficient condition for a selfmapping of a
topological space to be surjunctive. Similar conditions are stated
in~\cite{Gromov99}.

\begin{lemma}\label{density}
Let X be a topological space, let $\tau : X \rightarrow X$ be a
closed function and let $(X_i)_{i \in I}$ be a family of subsets of
$X$ such that
\begin{itemize}
\item[--] $X = \overline{\bigcup_{i \in I}{X_i}}$
\item[--] $\tau(X_i) \subseteq X_i$
\item[--] $\tau_{|X_i} : X_i \rightarrow X_i$ is surjunctive
\end{itemize}
then $\tau$ is surjunctive.
\end{lemma}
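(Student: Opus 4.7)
The plan is to prove the contrapositive form of surjunctivity: assuming $\tau$ is injective, deduce that $\tau$ is surjective. I would proceed in four short steps.

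First, I would observe that injectivity passes to the restrictions. If $\tau$ is injective on $X$, then in particular $\tau_{|X_i}: X_i \to X_i$ is injective for every $i \in I$ (note that by the hypothesis $\tau(X_i) \subseteq X_i$ the restriction is indeed a selfmapping). The hypothesis that each $\tau_{|X_i}$ is surjunctive then forces $\tau_{|X_i}$ to be surjective, i.e.\ $\tau(X_i) = X_i$ for every $i$.

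Second, taking the union over $i$ yields
\[
\bigcup_{i \in I} X_i = \bigcup_{i \in I} \tau(X_i) = \tau\!\Bigl(\bigcup_{i \in I} X_i\Bigr) \subseteq \tau(X).
\]
Third, I invoke the closedness hypothesis: since $X$ is (trivially) closed in itself, $\tau(X)$ is a closed subset of $X$. Hence $\tau(X)$ contains the closure of any subset it contains, and in particular
\[
\tau(X) \supseteq \overline{\bigcup_{i \in I} X_i} = X,
\]
by the first hypothesis on the family $(X_i)$. This gives $\tau(X) = X$, i.e.\ $\tau$ is surjective, completing the contrapositive.

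There is essentially no obstacle here; the argument is a direct chase through the three hypotheses, each used exactly once (stability $\tau(X_i)\subseteq X_i$ to make the restrictions well-defined as selfmaps, surjunctivity of each $\tau_{|X_i}$ to upgrade injectivity to surjectivity fibrewise, and closedness of $\tau$ to push the equality from the dense union to the whole space). The only subtlety worth flagging in the write-up is that one must verify the restriction $\tau_{|X_i}$ really is a function $X_i \to X_i$ (which is exactly what $\tau(X_i) \subseteq X_i$ provides) before it is meaningful to speak of its surjunctivity.
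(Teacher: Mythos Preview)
Your proof is correct and follows essentially the same route as the paper's: assume injectivity, deduce $\tau(X_i)=X_i$ from surjunctivity of the restrictions, take the union to get $\bigcup_i X_i \subseteq \tau(X)$, and then use closedness of $\tau$ together with $X=\overline{\bigcup_i X_i}$ to conclude $X\subseteq \tau(X)$. The only cosmetic difference is that the paper phrases the last step as $X=\overline{\bigcup_i X_i}\subseteq \overline{\tau(X)}=\tau(X)$, which is equivalent to your formulation.
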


\begin{proof}
If $\tau$ is injective then, for every $i \in I$, the restriction
$\tau_{|X_i}$ is injective as well. By the hypotheses we have
$\tau(X_i) = X_i$ and hence $\bigcup_{i \in I}X_i = \bigcup_{i \in
I}\tau(X_i) = \tau(\bigcup_{i \in I}X_i) \subseteq
\tau(\overline{\bigcup_{i \in I}X_i}) = \tau(X)$. Then $X =
\overline{\bigcup_{i \in I}X_i} \subseteq \overline{\tau(X)}$, and
$\tau$ being closed we have $X \subseteq \tau(X)$.
\end{proof}

In the following theorem we prove that the density of the periodic
configuration is a sufficient condition for the surjunctivity of a
local function defined on a subshift. By Theorem \ref{residually iff
dense}, we have that from the residual finiteness of $\Gamma$ it
follows that a cellular automaton $\tau : \mathcal{A}^\Gamma
\rightarrow \mathcal{A}^\Gamma$ is surjunctive. The groups
$\mathbb{Z}^n$ being residually finite, we have that this result
generalizes Richardson's theorem.

\begin{theorem}\label{surjunctive}
Let $X \subseteq \mathcal{A}^\Gamma$ be a shift whose set
$\mathcal{P}(X)$ of periodic configurations is dense in $X$. Then
every local function $\tau : X \rightarrow X$ is surjunctive.
\end {theorem}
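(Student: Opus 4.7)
The plan is to apply Lemma~\ref{density} to $\tau$ with the countable family $X_n := \mathcal{P}_n(X)$ for $n \geq 1$, and deduce the conclusion from the three hypotheses of that lemma plus the fact that $\tau$ is a closed map.

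First, I would check that $\tau$ is closed. The shift $X$ is closed in the compact space $\mathcal{A}^\Gamma$, hence compact. By Proposition~\ref{local}, $\tau$ is continuous, so it sends closed (equivalently compact) subsets of $X$ to compact, hence closed, subsets of $X$.

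Next, I would verify the three hypotheses of Lemma~\ref{density}. (i) Since $\bigcup_{n \geq 1} \mathcal{P}_n(X) = \mathcal{P}(X)$, the density assumption gives $X = \overline{\mathcal{P}(X)} = \overline{\bigcup_{n} X_n}$. (ii) The $\tau$-invariance $\tau(X_n) \subseteq X_n$ follows from the stabilizer inclusion $H_c \subseteq H_{\tau(c)}$ that appears in the proof of Proposition~\ref{dense}: if $c \in \mathcal{P}_n(X)$, then by Proposition~\ref{periodic iff constant} the index $[\Gamma : H_c]$ divides $n$, and from $H_c \subseteq H_{\tau(c)}$ we get that $[\Gamma : H_{\tau(c)}]$ divides $[\Gamma : H_c]$, hence divides $n$, so $\tau(c) \in \mathcal{P}_n(X)$. (iii) By Corollary~\ref{Pn finito}, $\mathcal{P}_n$ is finite, so $X_n = \mathcal{P}_n \cap X$ is finite; any selfmapping of a finite set is surjunctive (an injective map on a finite set is automatically a bijection).

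Lemma~\ref{density} then yields the surjunctivity of $\tau$. No genuine obstacle arises: the only place where the structure of shifts and local maps is actually used is in step (ii), namely that applying $\tau$ can only enlarge the stabilizer, so it preserves the finite $\tau$-invariant layers $\mathcal{P}_n(X)$ that exhaust a dense subset of $X$.
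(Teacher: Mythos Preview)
Your proof is correct and follows exactly the same approach as the paper: apply Lemma~\ref{density} to the family $(\mathcal{P}_n(X))_{n \geq 1}$, using Corollary~\ref{Pn finito} for finiteness and the stabilizer argument from Proposition~\ref{dense} for $\tau$-invariance. Your write-up is in fact more explicit than the paper's, since you spell out why $\tau$ is closed and why the density hypothesis of the lemma is met, whereas the paper leaves these implicit.
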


\begin{proof}
By Corollary \ref{Pn finito} we have that the set $\mathcal{P}_n(X)
= \mathcal{P}_n \cap X$ is finite. As proved for
Proposition~\ref{dense}, we have that if $\tau$ is local then
$\tau(\mathcal{P}_n(X)) \subseteq \mathcal{P}_n(X)$. Hence $\tau$ is
surjunctive by Lemma~\ref{density}.
\end{proof}

\begin{corollary}
If $\Gamma$ is a residually finite group and $\tau :
\mathcal{A}^\Gamma \rightarrow \mathcal{A}^\Gamma$ is a cellular
automaton, then $\tau$ is surjunctive.
\end{corollary}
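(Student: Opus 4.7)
The plan is to combine the two key results already established in the preceding sections and obtain the corollary essentially for free. First I would recall that the full shift $\mathcal{A}^\Gamma$ is itself a shift space (it is closed and $\Gamma$--invariant, corresponding to the empty set of forbidden blocks in Proposition~\ref{shift iff forbidden}), and that a cellular automaton is by definition a local function defined on this full shift.

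Next, I would invoke Theorem~\ref{residually iff dense}: because $\Gamma$ is residually finite, the set $\mathcal{P}$ of periodic configurations is dense in $\mathcal{A}^\Gamma$. In particular, setting $X = \mathcal{A}^\Gamma$, we have $\mathcal{P}(X) = \mathcal{P} \cap \mathcal{A}^\Gamma = \mathcal{P}$, which is dense in $X$.

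Finally, I would apply Theorem~\ref{surjunctive} to the shift $X = \mathcal{A}^\Gamma$ and to the local function $\tau : X \to X$, concluding that $\tau$ is surjunctive. The only nontrivial point worth checking is that Theorem~\ref{surjunctive} requires $\tau$ to have codomain $X$ itself; for a cellular automaton $\tau : \mathcal{A}^\Gamma \to \mathcal{A}^\Gamma$ this is automatic.

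There is no genuine obstacle here: the work has all been carried out in the preceding two sections. The only thing to be careful about is the logical chaining, namely noting that residual finiteness gives density of $\mathcal{P}$ in the full shift (not merely in some abstract sense), so that the hypothesis of Theorem~\ref{surjunctive} is satisfied when specialized to $X = \mathcal{A}^\Gamma$.
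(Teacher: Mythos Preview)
Your proposal is correct and matches the paper's approach exactly: the paper treats this corollary as immediate from Theorem~\ref{residually iff dense} and Theorem~\ref{surjunctive}, and in fact provides no separate proof for it. The discussion you give about checking that $X=\mathcal{A}^\Gamma$ satisfies the hypotheses is accurate and slightly more detailed than the paper itself.
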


\noindent {\bf Remark.} The implication injective $\Rightarrow$
surjective in Theorem~\ref{surjunctive} is not invertible. An
example is the following: let $\mathcal{A} = \{ 0,1 \}$ and $\Gamma
= \mathbb{Z}$. Let $\tau$ be the cellular automaton given by the
local rule $\delta : \mathcal{A}^3 \rightarrow \mathcal{A}$ such
that
$$\delta(a_1, a_2, a_3) = a_1 + a_3 \mod 2.$$
The function $\tau$ is surjective and not injective. Indeed if
$(a_z)_{z \in \mathbb{Z}}$ is a configuration in
$\mathcal{A}^\mathbb{Z}$, a pre--image is given by:
$$
\left\{
\begin{array}{l}
b_0 = 0\\
b_1 = 0\\
b_{n+1} = a_n - b_{n-2} \mod 2\ {\rm if}\ n \geq 2\\
b_{-n} = a_{-n+1} - b_{-n+2} \mod 2\ {\rm if}\ n \leq 0\\
\end{array}
\right.
$$
that is

\begin{figure}[htbp]
\begin{center}
\begin{tabular}{|c|c|c|c|c|c|c|c|c|c|c|}
\hline
\dots& $a_{-2} - a_0$ & $a_{-1}$ & $a_0$ & 0 & 0 & $a_1$ & $a_2$ & $a_3 - a_1$ & $a_4 - a_2$ &\dots \\
\hline
\dots& $a_{-3}$ & $a_{-2}$ & $a_{-1}$ & $a_0$ & $a_1$ & $a_2$ & $a_3$ & $a_4$ & $a_5$ &\dots \\
\hline
\end{tabular}\ .
\end{center}
\end{figure}

\noindent By taking $b_0 = 1 = b_1$ we get a different pre--image.

\subsection{Garden of Eden theorem} For $n$--dimensional cellular automata,
Moore~\cite{Moore62} has given a sufficient condition for the
existence of a a pattern without pre--image. Moore's condition (that
is the existence of two different patterns - called \emph{mutually
erasable} - for which each pair of extending configurations that
coincide outside their supports, have the same image) was also
proved to be necessary by Myhill~\cite{Myhill63}. This equivalence
between ``local surjectivity'' and ``local injectivity'' of a
cellular automaton is the classical well--known \emph{Garden of Eden
(GOE) theorem}.

The GOE theorem has been generalized by Machì and
Mignosi~\cite{MachiMignosi93} to any cellular automata defined on a
group of non--exponential growth. Later it has been proved by
Ceccherini \emph{et al.}~\cite{CeccheriniMachiScarabotti99} for the
wider class amenable groups.

It can be proved (see~\cite[Theorem 5]{MachiMignosi93}) that a
cellular automaton is surjective if and only if it do not admit
patterns without pre--image. The same holds for local functions on
subshifts (see~\cite[Proposition
4.1]{CeccheriniFiorenziScarabotti04}). Moreover, we
proved~\cite[Proposition 4.2]{CeccheriniFiorenziScarabotti04} that
the existence of two mutually erasable patterns is equivalent the
\emph{pre--injectivity} of the cellular automaton. This latter
property has been introduced by Gromov~\cite{Gromov99} and
corresponds to the injectivity of the automaton on the configuration
differing only on a finite set. Hence, the GOE theorem could be
restated as the equivalence between pre--injectivity and
surjectivity of a cellular automaton.

GOE--like theorems could be investigated even more generally for
local functions defined on shift spaces. Obviously, whenever the GOE
theorem (or at least Myhill's implication) holds, we have the
surjunctivity of the local function. In this connection, the GOE
theorem has been proved for one--dimensional irreducible shifts of
finite type in~\cite{Fiorenzi00}. Moreover, we prove Myhill's
implication for irreducible sofic shift. In~\cite{Fiorenzi03} we
proved that the GOE theorem for strongly irreducible shifts of
finite type on an amenable group. Finally, we proved
in~\cite{Fiorenzi04} that Myhill's implication holds for
semi--strongly irreducible shift of finite type on a group of
nonexponential growth. In all these cases the surjunctivity of a
local function holds.

\section{The $n$--dimensional case}\label{The n-dimensional case}
In this section we focus on the density of periodic configurations
for an $n$--dimensional shift. In the one--dimensional case we prove
this density for an irreducible shift of finite type of and hence, a
sofic shift being the image under a local map of a shift of finite
type, for an irreducible sofic shift. The situation in the
two--dimensional case is deeply different: there are counterexamples
of mixing shifts of finite type $X$ for which the set
$\mathcal{P}(X)$ is not dense.

We conclude this section by listing some well--known decision
problems for $n$--dimensional shifts proving that in the special
case of a one--dimensional shift they can be solved. More generally
they can be solved for the class of group shifts using some results
due to Wang~\cite{Wang61} and Kitchens and
Schmidt~\cite{KitchensSchmidt88}.

\begin{proposition}\label{dense in finite type in Z}
If $X \subseteq \mathcal{A}^\mathbb{Z}$ is an irreducible shift of
finite type, then $\mathcal{P}(X)$ is dense in X.
\end{proposition}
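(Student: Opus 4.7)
The plan is to approximate an arbitrary configuration $c \in X$ by an honestly periodic configuration obtained by repeating a long central factor of $c$ glued to itself via a connecting word supplied by irreducibility. Fix $c \in X$ and assume without loss of generality that $X$ is $M$--step. Given $n$ with $2n+1 \geq M$, let $u = c_{|[-n,n]} \in \mathcal{L}(X)$. By irreducibility applied to the pair $(u,u)$ I obtain a word $w \in \mathcal{L}(X)$ such that the concatenated word $uwu$ lies in $\mathcal{L}(X)$.

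The heart of the argument is to show that the bi--infinite periodic word $c' = \ldots uw\,uw\,uw \ldots$, positioned so that one copy of $u$ occupies the interval $[-n,n]$, belongs to $X$. First I would verify by induction on $k$ that $(uw)^k u \in \mathcal{L}(X)$ for every $k \geq 1$. The base case $k = 1$ is $uwu$. For the inductive step, I apply Proposition~\ref{pasting in Z} to the factorization $(uw)^k u \cdot wu$: the middle block $u$ has length $2n+1 \geq M$, and both $(uw)^k u$ and $uwu$ are in $\mathcal{L}(X)$ (by the inductive hypothesis and the base case, respectively), so their concatenation $(uw)^{k+1} u$ lies in $\mathcal{L}(X)$ as well.

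Now every finite factor of the periodic bi--infinite word $c'$ appears inside some $(uw)^k u$: since $c'$ has period $|uw|$, any factor of length $L$ is a shift of a factor of $(uw)^k$ for $k$ large enough, hence of $(uw)^k u$. In particular, every factor of length $M+1$ of $c'$ lies in $\mathcal{L}(X)$, so $c'$ contains no forbidden block, which by the definition of an $M$--step shift places $c'$ in $X$. Since $c'$ is obviously periodic (its orbit under $\mathbb{Z}$ has at most $|uw|$ elements, so its stabilizer has finite index) and coincides with $c$ on $[-n,n]$, one has $\mathrm{dist}(c,c') \leq 1/(n+2)$. Letting $n \to \infty$ produces a sequence of periodic configurations in $X$ converging to $c$.

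The only point that requires care is the pasting step: one must choose the window $[-n,n]$ long enough that the central $u$ itself has length at least $M$, so that Proposition~\ref{pasting in Z} can be applied with $u$ as the overlap. Everything else is bookkeeping, and neither the sofic extension (handled by the final sentence of Corollary~\ref{irreducible sofic} together with Proposition~\ref{dense}) nor the closure aspect plays a role here.
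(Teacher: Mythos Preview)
Your proof is correct and follows essentially the same route as the paper: use irreducibility to manufacture a word that can be repeated periodically, and invoke Proposition~\ref{pasting in Z} to certify that the resulting bi--infinite periodic word lies in $X$. The only cosmetic difference is that the paper introduces an auxiliary word $w$ of length exactly $M$ and uses irreducibility twice to obtain $w\,v_n\,u_n\,w_n\,w \in \mathcal{L}(X)$, so that $w$ serves as the overlap of length $M$; you instead take $n$ large enough that $u$ itself has length $\geq M$ and use $u$ as the overlap, which is slightly more economical and avoids the auxiliary word.
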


\begin{proof} Suppose that $X$ has memory $M$. Let $c
\in X$ and let $u_n = c_{|[-n,n]}$. Fix $w \in \mathcal{L}(X)$ with
$|w| = M$, the shift $X$ being irreducible, there exist two words
$v_n, w_n \in \mathcal{L}(X)$ such that
$$w \ v_n u_n w_n \ w \in \mathcal{L}(X).$$
Let $c_n$ be the periodic configuration
$$\dots w \ v_n u_n w_n \ w \ v_n u_n w_n \ w \dots =
\overline{w \ v_n u_n w_n}\ .$$
By Proposition~\ref{pasting in Z}, we have that $c_n \in X$.
Moreover ${c_n}_{|[-n,n]}$ = $c_{|[-n,n]}$ and hence $\lim_{n
\rightarrow \infty} c_n$ = $c$. \end{proof}

\begin{corollary}
If $X \subseteq \mathcal{A}^\mathbb{Z}$ is an irreducible sofic
shift, then $\mathcal{P}(X)$ is dense in X.
\end{corollary}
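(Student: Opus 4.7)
The plan is to assemble three already-established facts into a one-step deduction, with no new combinatorics required. The target corollary asks that $\mathcal{P}(X)$ be dense in an irreducible sofic $X \subseteq \mathcal{A}^\mathbb{Z}$, and the prior results supply exactly the two inputs needed: a structural description of irreducible sofic shifts and a transport principle for density under local maps.

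First, I would apply Corollary \ref{irreducible sofic} to represent $X$ as $X = \tau(Y)$, where $Y \subseteq \mathcal{B}^\mathbb{Z}$ is an irreducible shift of finite type (for instance, the edge shift of the underlying graph of the minimal deterministic presentation of $X$) and $\tau$ is a local function. After enlarging the alphabet if necessary, as in the remark following the definition of locality, I may view $\tau$ as a local function into $(\mathcal{A} \cup \mathcal{B})^\mathbb{Z}$, which is the form in which Proposition \ref{dense} is stated.

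Next, Proposition \ref{dense in finite type in Z} applied to $Y$ yields that $\mathcal{P}(Y)$ is dense in $Y$. Finally, Proposition \ref{dense} says that whenever $\mathcal{P}(Y)$ is dense in a shift $Y$ and $\tau$ is local, $\mathcal{P}(\tau(Y))$ is dense in $\tau(Y)$. Substituting $\tau(Y) = X$ completes the argument, so the proof is a three-line chain: sofic $\Rightarrow$ factor of irreducible SFT $\Rightarrow$ dense periodics upstairs $\Rightarrow$ dense periodics downstairs.

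There is essentially no obstacle at this point, since all the work has been done in the cited results; the only thing to verify is that the invocation of Proposition \ref{dense} is legal, i.e.\ that the image of a periodic configuration under a local map is again periodic. This is exactly the inclusion $\tau(\mathcal{P}(Y)) \subseteq \mathcal{P}(\tau(Y))$ established in the proof of Proposition \ref{dense} via the containment $H_c \subseteq H_{\tau(c)}$ of stabilizers, so no further work is required.
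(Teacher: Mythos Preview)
Your proposal is correct and follows exactly the paper's own argument: invoke Corollary~\ref{irreducible sofic} to write $X$ as a local image of an irreducible shift of finite type, then combine Proposition~\ref{dense in finite type in Z} with Proposition~\ref{dense}. The extra remarks you add about alphabet enlargement and the stabilizer inclusion are harmless elaborations of steps the paper leaves implicit.
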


\begin{proof}
By Corollary \ref{irreducible sofic}, we have that every irreducible
sofic shift is the image under a local map of an irreducible shift
of finite type. Hence propositions~\ref{dense in finite type in Z}
and~\ref{dense} apply.
\end{proof}

\begin{counterexample}
{\rm The finite type condition does not imply, in general, the
density of the periodic configurations of a shift.}
\end{counterexample}

\begin{proof}
Let $\mathcal{A} = \{ 0,1 \}$ and let $X$ be the shift of finite
type with set of forbidden blocks $\mathcal{F} = \{ 01 \}$. Then the
elements of $X$ are the configurations $\bar 0$, $\bar 1$ and the
configurations of the type $\dots \ 111111000000\ \dots$. Clearly
$X$ is not irreducible because there are no words $u \in
\mathcal{L}(X)$ such that $0 u 1 \in \mathcal{L}(X)$. In this shift
we have $\mathcal{P}(X) = \{ \bar 0, \bar 1 \}$ which is closed (and
so not dense) in $X$.
\end{proof}
Notice that, for this shift, a local selfmapping is injective if and
only if it is surjective and hence surjunctivity holds even if the
set of periodic configurations is not dense.

\medskip

\noindent {\bf Remark.} If $X$ is a subshift of
$\mathcal{A}^\mathbb{Z}$, it is always possible to define an
irreducible subshift $\bar X$ of $\mathcal{A}^{\mathbb{Z}^2}$
consisting of copies of $X$. More precisely, a configuration $c$
belongs to $\bar X$ if and only if each horizontal line of $c$ (i.e.
the bi--infinite word $(c_{|(z,t)})_{z \in \mathbb{Z}}$, for each
fixed $t \in \mathbb{Z}$), belongs to $X$. Hence $X$ and $\bar X$
have the same set of forbidden blocks and it is obvious that the
shift $\bar X$ is of finite type if $X$ is. The irreducibility of
$\bar X$ can be easily seen: given two blocks of the shift, it
suffices to translate one of them in the vertical direction in such
a way that the supports are far enough.

\begin{counterexample}\label{finite type and irreducible in Z^2 - no dense}
{\rm The density of periodic configurations does not hold, in
general, for two--dimensional irreducible shifts of finite type.}
\end{counterexample}

\begin{proof}
Let $\bar X$ be the shift over the alphabet $\mathcal{A} = \{ 0,1
\}$ generated by the shift $X$ of the previous counterexample. Then
$\bar X$ is irreducible and of finite type. The set
$\mathcal{P}(\bar X)$ is in this case contained in the set of all
those configurations assuming constant value at each horizontal
line. It is then clear that a configuration assuming for example the
value $1$ at $(0,0)$ and $0$ at $(1,0)$, cannot be approximated with
any sequence of periodic configurations.
\end{proof}

Corollary \ref{dense in group shift} gives an answer to the problems
arising from this counterexample.

\medskip

Even if we strengthen the irreducibility hypothesis by assuming that
the shift is mixing, there are examples of two--dimensional mixing
shifts of finite type and local selfmappings which are injective and
not surjective (see \cite{Weiss00}).

\subsection{Further decision problems}\label{Further decision problems}
We conclude this section with some other decision problems arising
in the case of $n$--dimensional subshifts of finite type.

\begin{itemize}

\item[--] The \emph{tiling problem}: given a finite list $\mathcal{F}$ of
forbidden blocks is $\mathsf{X}_\mathcal{F}$ empty or non--empty? In
fact the tiling problem is an equivalent formulation of the
\emph{domino problem}, proposed by Wang \cite{Wang61}.

\item[--]
A problem strictly related with this latter is the following: given
a finite list $\mathcal{F}$ of forbidden blocks, is there a periodic
configuration in $\mathsf{X}_\mathcal{F}$?

\item[--]
Given a finite list $\mathcal{F}$ of forbidden blocks, are the
periodic configurations dense in $\mathsf{X}_\mathcal{F}$?

\item[--]
The \emph{extension problem}: given a finite list $\mathcal{F}$ of
forbidden blocks and given an \emph{allowable} block (that is a
block in which does not appear any forbidden block), is there a
configuration in $\mathsf{X}_\mathcal{F}$ in which it appears?
Clearly a positive answer to the extension problem would imply a
positive answer to the tiling problem.
\end{itemize}

Now we prove that the answers for subshifts of finite type of
$\mathcal{A}^\mathbb{Z}$ are all positive: there are algorithms to
decide, the tiling and the extension problems and there is an
algorithm to decide whether or not the periodic configurations are
dense in $X$. In order to see the first two algorithms, consider,
more generally, a sofic shift. If $\mathsf{A}$ is a finite automaton
accepting $X$ (and we may assume that the underlying graph is
essential), it can be easily seen that $X$ is non--empty if and only
if it exists a cycle on the graph. Hence the shift is non--empty if
and only if it contains a periodic configuration. On the other hand,
the language of $X$ is the language accepted by $\mathsf{A}$. Hence
an allowable word is a word of the language if and only if it is
accepted by $\mathsf{A}$.

To establish the density of the periodic configurations, suppose
that $X$ is of finite type with memory $M$. One has that
$\mathcal{P}(X)$ is dense in $X$ if and only if
$\mathcal{P}(X^{[M+1]})$ is dense in $X^{[M+1]}$. The shift
$X^{[M+1]}$ is an edge shift accepted by the graph $\mathsf{G}$
constructed in Section~\ref{Sofic shifts} and hence the set
$\mathcal{P}(X^{[M+1]})$ is dense in $X^{[M]}$ if and only if each
edge of $\mathsf{G}$ is contained in a strictly connected component
of $\mathsf{G}$, that is if the graph $\mathsf{G}$ has no edges
connecting two different connected components.

For the subshifts of finite type of $\mathcal{A}^{\mathbb{Z}^2}$ the
answers are quite different. In this setting Berger proved in
\cite{Berger66} the existence of a non--empty shift of finite type
containing no periodic configurations and the undecidability of the
tiling problem. Sufficient conditions to the decidability of tiling
and extension problems are the following.

\begin{theorem}[Wang \cite{Wang61}]
If every non--empty subshift of finite type of
$\mathcal{A}^{\mathbb{Z}^2}$ contains a periodic configuration then
there is an algorithm to decide the tiling problem.
\end{theorem}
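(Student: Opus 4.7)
The plan is to exhibit a decision procedure by running two semi-algorithms in parallel: one that halts when $\mathsf{X}_\mathcal{F}$ is non-empty, and one that halts when $\mathsf{X}_\mathcal{F}$ is empty. Since the hypothesis forces one of these two alternatives to be detectable within finite time, their parallel execution yields a genuine algorithm.

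First, I would describe the positive semi-algorithm. Enumerate, for increasing $n \in \mathbb{N}$, all patterns $p : [0,n-1]^2 \rightarrow \mathcal{A}$; each such $p$ extends by $n\mathbb{Z}^2$-periodicity to a configuration $\tilde p \in \mathcal{A}^{\mathbb{Z}^2}$. Since $\tilde p$ has only finitely many distinct translates, one can effectively check, by looking at all blocks of the appropriate finite size inside a fundamental domain large enough to contain the supports of every element of $\mathcal{F}$ after translation, whether $\tilde p$ avoids $\mathcal{F}$. If some $\tilde p$ passes the test, halt and output ``non-empty''. By hypothesis, if $\mathsf{X}_\mathcal{F}$ is non-empty it contains a periodic configuration, and every periodic configuration of $\mathbb{Z}^2$ is constant on the cosets of some finite-index subgroup (Proposition~\ref{periodic iff constant}); after enlarging the subgroup to some $n\mathbb{Z}^2$, the configuration arises as the periodic extension of some $n \times n$ pattern. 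Hence this semi-algorithm terminates whenever $\mathsf{X}_\mathcal{F}$ is non-empty.

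Next I would describe the negative semi-algorithm, which exploits compactness. Without loss of generality, all blocks in $\mathcal{F}$ have support in some fixed disk $D_M$. For increasing $N$, enumerate all patterns $q : D_N \rightarrow \mathcal{A}$ and discard those containing, on any translate of $D_M$ inside $D_N$, a block of $\mathcal{F}$. If at some stage no surviving pattern is left, halt and output ``empty''. The key point is that if $\mathsf{X}_\mathcal{F}$ is empty then some such $N$ exists: otherwise, for every $N$ there would be an $\mathcal{F}$-avoiding pattern $q_N$ on $D_N$; by compactness of $\mathcal{A}^{\mathbb{Z}^2}$ (Tychonoff), a subsequence of the configurations obtained by arbitrary extensions of the $q_N$ converges to some $c \in \mathcal{A}^{\mathbb{Z}^2}$, and $c$ must avoid every forbidden block by a standard diagonal argument, contradicting emptiness.

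Running the two procedures in parallel on the same input $\mathcal{F}$, exactly one of them halts in finite time and delivers the correct answer. The conceptual heart of the argument is the hypothesis, which is exactly what converts the semi-decidability of non-emptiness into decidability; the compactness half works unconditionally. The main subtlety to check carefully is the effectivity of the test ``$\tilde p$ avoids $\mathcal{F}$'' in the positive semi-algorithm, which I would handle by noting that, because $\tilde p$ has only $n^2$ distinct translates, verifying that no $\mathcal{F}$-block occurs anywhere in $\tilde p$ reduces to a finite check on a single fundamental domain suitably enlarged to contain $D_M$.
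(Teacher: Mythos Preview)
The paper does not supply its own proof of this theorem: it is stated as a result of Wang with a citation to~\cite{Wang61} and no argument is given. Your proposal is the classical proof of Wang's observation, and it is correct. The two semi-algorithms are exactly the standard ones: the negative one relies only on compactness of $\mathcal{A}^{\mathbb{Z}^2}$ (equivalently, K\"onig's lemma for finitely branching trees), while the positive one is where the hypothesis is used, since without it non-emptiness of a two-dimensional shift of finite type is only co-recursively enumerable. Your reduction of an arbitrary periodic configuration to an $n\mathbb{Z}^2$-periodic one is also fine: any finite-index subgroup of $\mathbb{Z}^2$ contains $n\mathbb{Z}^2$ for some $n$, so Proposition~\ref{periodic iff constant} indeed guarantees that the positive search terminates.
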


\begin{theorem}[Kitchens and Schmidt \cite{KitchensSchmidt88}]
If every subshift of finite type of $\mathcal{A}^{\mathbb{Z}^2}$ has
dense periodic configurations then there is an algorithm to decide
the extension problem.
\end{theorem}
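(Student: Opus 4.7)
The plan is to combine a standard compactness argument with the density hypothesis to design a terminating semi-algorithm with two parallel processes: one searching for a periodic witness that the allowable block extends, one searching for a finite obstruction showing it does not.

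More precisely, let $\mathcal{F}$ be the finite list of forbidden blocks and let $p$ be the allowable block, with support some finite set $E \subseteq \mathbb{Z}^2$. By shift-invariance of $\mathsf{X}_\mathcal{F}$, deciding the extension problem amounts to deciding whether there exists $c \in \mathsf{X}_\mathcal{F}$ with $c_{|E} = p$. Run in parallel, for $n = 1, 2, 3, \dots$:
\begin{itemize}
\item[(a)] Enumerate all finite-index subgroups $H \leq \mathbb{Z}^2$ of index at most $n$ (for $\mathbb{Z}^2$ there are finitely many such subgroups and they are effectively listable, e.g.\ via Hermite normal form), and for each $H$ enumerate all functions from a fundamental domain of $H$ (chosen to contain $E$) into $\mathcal{A}$. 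Each such function defines a periodic configuration $c'$; check whether $c' \in \mathsf{X}_\mathcal{F}$ (a finite check, since $c'$ is periodic and forbidden blocks have bounded support) and whether $c'_{|E} = p$. If such $c'$ is found, output \emph{yes}.
\item[(b)] Enumerate all patterns on $D_n$ (assuming $n$ is large enough that $E \subseteq D_n$) that agree with $p$ on $E$ and avoid every block of $\mathcal{F}$. If for some $n$ no such pattern exists, output \emph{no}.
\end{itemize}

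The argument for termination is where both halves of the hypothesis come in. If $p$ has no extension in $\mathsf{X}_\mathcal{F}$, then by the standard compactness argument on $\mathcal{A}^{\mathbb{Z}^2}$ (equivalently, K\"onig's lemma applied to the tree of allowable pattern extensions of $p$), some finite disk $D_n$ admits no allowable extension of $p$, so process (b) halts. Conversely, if $p$ does have an extension $c \in \mathsf{X}_\mathcal{F}$, then $\mathsf{X}_\mathcal{F}$ is a non-empty subshift of finite type and the hypothesis gives that its periodic configurations are dense. In particular the basic open neighborhood
\[
\{\, c'' \in \mathsf{X}_\mathcal{F} \mid c''_{|E} = p \,\}
\]
of $c$ meets $\mathcal{P}(\mathsf{X}_\mathcal{F})$, yielding a periodic $c'$ with $c'_{|E} = p$. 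By Proposition~\ref{periodic iff constant}, $c'$ is constant on the right cosets of some finite-index subgroup $H$, and process (a) discovers $c'$ once the enumeration reaches an index at least $[\mathbb{Z}^2 : H]$.

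The only real work is verifying the two minor effectiveness points: effective enumeration of finite-index subgroups of $\mathbb{Z}^2$ with fundamental domains, and the finite-check that a periodic $c'$ avoids $\mathcal{F}$ (which reduces, via $\mathbb{Z}^2$-invariance of $c'$, to checking finitely many translates covering the cosets). Neither is a genuine obstacle — both are routine. The conceptual heart of the argument, and what makes it non-trivial, is that the density hypothesis is used to confine the positive search to the countable, effectively listable set of periodic configurations, while compactness confines the negative search to finite patterns; without the hypothesis, the positive side would only be semi-decidable.
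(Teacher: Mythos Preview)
Your argument is correct and is the standard dovetailing proof of this kind of decidability-under-density result. Note, however, that the paper does not itself prove this theorem: it is quoted as a known result of Kitchens and Schmidt~\cite{KitchensSchmidt88}, stated without proof in Section~\ref{Further decision problems} alongside Wang's analogous theorem for the tiling problem. So there is no paper-internal proof to compare against.

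For what it is worth, your write-up matches the expected argument: compactness of $\mathcal{A}^{\mathbb{Z}^2}$ guarantees that a non-extendable allowable block is witnessed by a finite obstruction (making the negative side semi-decidable), while the density hypothesis guarantees that an extendable block is witnessed by a \emph{periodic} configuration (making the positive side semi-decidable via an effective enumeration of periodic points). The effectiveness remarks you flag --- enumerating finite-index subgroups of $\mathbb{Z}^2$ by Hermite normal form, and reducing the check $c' \in \mathsf{X}_\mathcal{F}$ for periodic $c'$ to finitely many translates --- are indeed routine. One small cosmetic point: in process~(a) you require the fundamental domain of $H$ to contain $E$, but this need not be possible for every finite-index $H$ (the index could be smaller than $|E|$); it is cleaner to simply enumerate all $H$-periodic configurations directly (there are $|\mathcal{A}|^{[\mathbb{Z}^2:H]}$ of them) and then test both conditions $c'_{|E}=p$ and $c'\in\mathsf{X}_\mathcal{F}$ afterwards. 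This does not affect correctness, since any periodic $c'$ with $c'_{|E}=p$ will eventually be found once the enumeration reaches $[\mathbb{Z}^2:H_{c'}]$.
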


The following result is a consequence of these facts and of
Corollary~\ref{dense in group shift}.

\begin{corollary}
If $X \leq \mathcal{A}^{\mathbb{Z}^2}$ is a group shift, then the
tiling and extension problems are decidable for $X$.
\end{corollary}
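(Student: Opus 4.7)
The strategy is to combine Corollary~\ref{dense in group shift} with the two decidability theorems of Wang and Kitchens--Schmidt stated immediately above. First, I would note that by the Kitchens--Schmidt structure theorem \cite[Corollary 3.9]{KitchensSchmidt89} (recalled in Section~\ref{Group shifts}), a group shift $X \leq \mathcal{A}^{\mathbb{Z}^2}$ is automatically of finite type. Thus every instance of either decision problem for the class of group shifts is presented by a finite list $\mathcal{F}$ of forbidden blocks, and one may assume this as the input. Moreover, Corollary~\ref{dense in group shift} applies: $\mathcal{P}(X)$ is dense in $X$; in particular, every non--empty group shift contains a periodic configuration.

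For the tiling problem I would run two semi--decision procedures in parallel on input $\mathcal{F}$. The first enumerates periodic configurations of $\mathcal{A}^{\mathbb{Z}^2}$ in order of increasing period (using that, by Corollary~\ref{Pn finito}, there are only finitely many configurations of any given period) and halts accepting as soon as one of them lies in $\mathsf{X}_\mathcal{F}$. The second enumerates finite square patches and halts rejecting as soon as it finds one that admits no $\mathcal{F}$--avoiding filling. A standard compactness argument guarantees that the second halts whenever $\mathsf{X}_\mathcal{F}$ is empty, while the density of periodic configurations supplied by Corollary~\ref{dense in group shift} guarantees that the first halts whenever $\mathsf{X}_\mathcal{F}$ is non--empty. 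For the extension problem the same scheme works: given an allowable block $b$, density of $\mathcal{P}(X)$ forces any configuration of $X$ extending $b$ to be approximable by periodic ones which themselves extend $b$, and the enumeration will find such a periodic extension; in parallel, a compactness/patch enumeration gives the rejecting branch.

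The main obstacle is largely notational. Wang's theorem and the Kitchens--Schmidt theorem as quoted above quantify their periodicity hypothesis over \emph{every} subshift of finite type in $\mathcal{A}^{\mathbb{Z}^2}$, whereas here Corollary~\ref{dense in group shift} only supplies the hypothesis within the sub--class of group shifts. What I need to observe is that the enumeration--plus--compactness arguments underlying both theorems are entirely uniform in the shift and hence relativize to any sub--class of SFTs for which the relevant periodicity hypothesis holds; the decision procedure then takes as a promise that its input $\mathcal{F}$ presents a group shift, which is precisely the decision problem under consideration. Once this localization is made explicit, both decidability statements follow at once from Corollary~\ref{dense in group shift}.
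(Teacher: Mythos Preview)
Your proposal is correct and follows the same route the paper intends: the paper gives no argument beyond the sentence ``a consequence of these facts and of Corollary~\ref{dense in group shift}'', i.e.\ combine Wang's theorem, the Kitchens--Schmidt theorem, and the density of periodic configurations in group shifts. You carry out exactly this combination, and in fact you go further than the paper by (i) spelling out the dovetailed semi--decision procedures that underlie Wang's and Kitchens--Schmidt's results, and (ii) making explicit the relativization issue---namely that those theorems are stated under a hypothesis on \emph{all} two--dimensional SFTs which is false in general, so one must observe that the underlying enumeration-plus-compactness algorithm works for any individual SFT satisfying the periodicity conclusion, and then take as a promise that the input presents a group shift. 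The paper leaves this point implicit; your treatment is the more careful one.
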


\bibliographystyle{siam}
\bibliography{Fiorenzi09}
\end{document}